\newcommand{\argmin}{\operatornamewithlimits{argmin}}
\def\section{\@startsection {section}{1}{\z@}{-3.5ex plus -1ex minus -.2ex}{2.3ex plus .2ex}{\large\bf}}
\def\subsection{\@startsection {subsection}{2}{\z@}{-3.5ex plus -1ex minus 
-.2ex}{2.3ex plus .2ex}{\normalsize\bf}}
\def\subsubsection{\@startsection {subsubsection}{3}{\z@}{-3.5ex plus -1ex minus 
-.2ex}{2.3ex plus .2ex}{\normalsize\bf}}
  \newcounter{subeqncnt}
  \def\thesubeqncnt{\alph{subeqncnt}}
  \def\subequations{\begingroup%
     \stepcounter{equation}\edef\@tempa{\theequation}%
     \let\c@equation\c@subeqncnt\c@subeqncnt\z@ 
     \edef\theequation{\@tempa\noexpand\thesubeqncnt}}
\newtheorem{theorem}{Theorem}
\begin{document}

\title{Escherization with Generalized Distance Functions Focusing on Local Structural Similarity}

\author{
Yuichi Nagata\thanks{Corresponding author: nagata@is.tokushima-u.ac.jp}
\and Shinji Imahori\textsuperscript{\dag}
}

\date{}
\maketitle

{
\center 
\footnotesize
\textsuperscript{*} Graduate School of Technology, Industrial and Social Sciences, Tokushima University, 2-1 Minami-jousanjima, Tokushima-shi, Tokushima 770-8506, Japan \\
\textsuperscript{\textsuperscript{\dag}} Department of Information and System Engineering Faculty of Science and Engineering, Chuo University, Bunkyo-ku, Tokyo 112-8551, Japan \\
}
\maketitle

\maketitle

\begin{abstract}
The Escherization problem involves finding a closed figure that tiles the plane that is most similar to a given goal figure. In Koizumi and Sugihara's formulation of the Escherization problem, the tile and goal figures are represented as $n$-point polygons where the similarity between them is measured based on the difference in the positions between the corresponding points. This paper presents alternative similarity measures (distance functions) suitable for this problem. The proposed distance functions focus on the similarity of local structures in several different manners. The designed distance functions are incorporated into a recently developed framework of the exhaustive search of the templates for the Escherization problem. Efficient exhaustive and incomplete search algorithms for the formulated problems are also developed to obtain results within a reasonable computation time. Experimental results showed that the proposed algorithms found satisfactory tile shapes for fairly complicated goal figures in a reasonable computation time. 

\bigskip

\noindent {\it Keyword}: Escherization, Escher-like tiling, similarity measure, tiling, tessellation 

\end{abstract}

\section{Introduction} \label{sec:intro}

A tiling is a collection of shapes, called tiles, which cover the plane with no gaps and no overlaps. Tiling has attracted considerable attention owing to its practical and mathematical aspects. The Dutch artist M.~C.~Escher studied tiling from a mathematical perspective and created many artistic tilings, each of which consists of one or a few recognizable figures such as animals. Such artistic tilings are now called Escher tilings. Designing an artistic Escher-like tiling is a highly intellectual task because it is difficult to create meaningful tile figures while satisfying the constraints to enable tiling. 

To generate Escher-like tilings automatically, Kaplan and Salesin \cite{kaplan2000escherization} formulated the following optimization problem.  
\smallskip

\noindent
{\it Escherization problem}: Given a closed plane figure $S$ (goal figure), find a closed figure $T$ such that 
\begin{enumerate}
\setlength{\parskip}{0cm} 
\setlength{\itemsep}{1mm} 
\item
$T$ is as close as possible to $S$, and  
\item
copies of $T$ fit together to form a tiling of the plane. 
\end{enumerate}
To tackle this problem, they introduced parameterizations of possible tile shapes for the 93 types of isohedral tilings, which are a class of tiling that is sufficiently flexible to express tiling patterns as well as mathematically tractable. For each isohedral type, the nature of tile shapes can be expressed by a template, a polygon composed of at most six vertices, from which all possible tile shapes are obtained by moving the vertices and deforming the edges under the constraints specified by the template. To measure the similarity between the two shapes $S$ and $T$, they employed a metric developed by Arkin et al. \cite{arkin1991efficiently}, which is based on the $L_2$ distance of their turn angle representations. They developed a simulated annealing (SA) algorithm to optimize the formulated problem and successfully found satisfactory tile shapes for convex or nearly convex goal figures. However, their SA algorithm did not perform well for non-convex goal figures.


Koizumi and Sugihara \cite{koizumi2011maximum} formulated the Escherization problem so that the formulated problem has a closed-form solution. In their formulation, the tile and goal shapes were represented as $n$-point polygons. For each isohedral type, possible tile shapes were parameterized by the template with $n$ points on the edges; possible positions of the $n$ points are obtained by moving the $n$ points of the template under the specified constraints. This parameterization made it possible to express the constraint conditions as a linear system. The Procrustes distance \cite{werman1995similarity} was employed to measure the similarity between the two shapes (polygons) $S$ and $T$. The Procrustes distance is rotation and scale-invariant distance metric, which measures the difference in the positions of the $n$ points between the two polygons. Under this formulation, the Escherization problem was reduced to an eigenvalue problem. Koizumi and Sugihara's method performed well for both convex and non-convex goal figures. 

Several enhancements of Koizumi and Sugihara's formulation have been developed. Imahori and Sakai \cite{imahori2013escher} parameterized tile shapes by assigning a different number of points to each edge of the template for each isohedral type, whereas all edges were assigned the same number of points in Koizumi and Sugihara's original formulation. This extension provides a considerable flexibility in the possible tile shapes and improves the quality of the obtained tile shapes. However, a huge number of different eigenvalue problems must be solved when all possible configurations of the templates are considered, and the exhaustive search of the templates was computationally impractical. As a compromise, they proposed a local search algorithm to search only promising configurations of the templates. Recently, Nagata and Imahori \cite{nagata2019} developed an efficient algorithm to perform the exhaustive search of the templates, and it can now be performed in a reasonable computation time (e.g. 0.55 s and 9.01 s for 60- and 120-point goal polygons, respectively). 

In the Escherization problem, it is also important to employ an appropriate similarity measure (or distance function) to obtain satisfactory tile shapes. In our preliminary study \cite{imahori2015escher}, we introduced weights to the Procrustes distance to emphasize the similarity with important parts of the goal polygon. In another preliminary study \cite{nagata2018escherization}, we proposed a distance function, which focuses on the similarity of local structures between the two polygons. The effectiveness of these distance functions has been demonstrated in each paper. However, these distance functions have not been incorporated into the exhaustive search of the templates because the required computational efforts were impractical. 

In this paper, we develop new distance functions suitable for Koizumi and Sugihara's formulation of the Escherization problem, aiming at better evaluating the similarity between the tile and goal shapes (polygons). Some of the distance functions proposed in this paper were presented in our preliminary studies \cite{imahori2015escher,nagata2018escherization}, but we further improve these distance functions by evaluating the similarity of local structures in more appropriate ways. Moreover, we develop efficient algorithms to perform the exhaustive search of the templates combined with the proposed distance functions. The main contributions of this paper are summarized as follows: (i) Several distance functions suitable for Koizumi and Sugihara's formulation of the Escherization problem are proposed. (ii) The proposed distance functions are incorporated into the exhaustive search of the templates and efficient exhaustive search algorithms are developed. (iii) The proposed distance functions and the Procrustes (or Euclidean) distance are discussed from a unified viewpoint in terms of the quality of the obtained tile shapes and the time complexity. 

The remainder of this paper is organized as follows. In Section \ref{sec:2}, Koizumi and Sugihara's formulation for the Escherization problem is described along with some related work and concepts. In Section \ref{sec:3}, several distance functions suitable for the Escherization problem are presented. In Section \ref{sec:4}, efficient algorithms for the exhaustive search of the templates combined with the proposed distance functions are presented. Experimental results are presented in Section \ref{sec:5}. The proposed distance functions are discussed in Section \ref{sec:6} and this paper is concluded in Section \ref{sec:7}.

\section{Background} \label{sec:2}

In this section, Koizumi and Sugihara's formulation of the Escherization problem is described along with some related work and concepts.

\subsection{Isohedral tilings}  \label{sec:2_Isohedral}

A {\it monohedral} tiling is one in which all the tiles have the same shape. If a monohedral tiling has a repeating structure where a parallel shift of the tiling can be mapped onto itself, this tiling is called an {\it isohedral} tiling. There are 93 types of isohedral tilings \cite{grunbaum1987tilings}, which are referred to individually as IH1, IH2, \dots, IH93. Fig. \ref{fig:tiling_IH47} illustrates an example of an isohedral tiling belonging to IH47 with a few technical terms. For an isohedral tiling, a point that is shared by the boundaries of three or more tiles is called a {\it tiling vertex}, whereas a surface that is shared by the boundaries of two tiles is called a {\it tiling edge}. A tiling polygon is defined as the polygon formed by connecting the tiling vertices of a tile.

\begin{figure} [t] 
\centering
\includegraphics[scale=0.45,keepaspectratio,clip]{./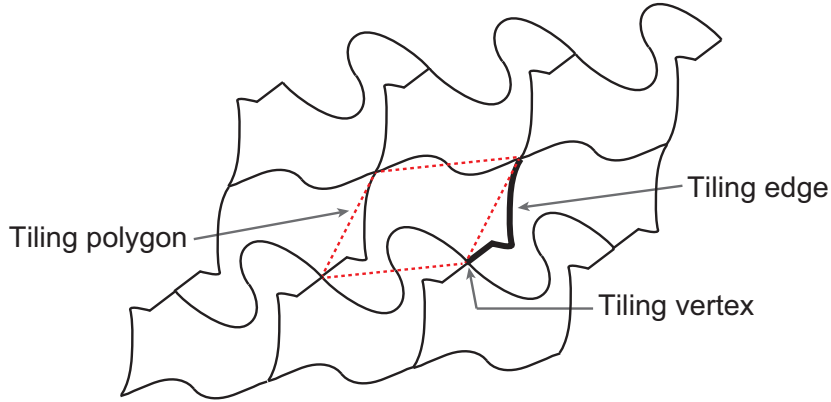}
\caption{Example of an isohedral tiling (IH47)}
\label{fig:tiling_IH47}
\end{figure} 

For each IH type, the nature of the possible tile shapes can be represented by a template \cite{kaplan2009introductory}. The template represents a tiling polygon, from which all possible tile shapes are obtained by moving the tiling vertices and deforming the tiling edges under the constraints specified by the template. Fig. \ref{fig:template}(a) illustrates a template of IH47; this template indicates that the tiling polygon is any quadrilateral consisting of two opposite {\tt J} edges parallel to each other and two arbitrary {\tt S} edges. There are four types of tiling edges (types {\tt J}, {\tt S}, {\tt U}, and {\tt I}) depending on how they can be deformed. It is intuitive and convenient to represent the edge types with colored arrowheads in the template (see Tom McLean's website: \url{https://www.jaapsch.net/tilings/mclean/index.html}), where {\tt J} and {\tt S} edges are shown in the figure. Possible deformations for the four tiling edge types are as follows:
\begin{itemize}
\setlength{\itemindent}{15pt}   
\item[{\tt J} edge:] 
This edge can be deformed into any shape, but the corresponding {\tt J} edge must also be deformed into the same shape, as suggested by the arrowheads. 
\item[{\tt S} edge:] 
This edge must be symmetric with respect to its midpoint. 
\item[{\tt U} edge:] 
This edge must be symmetric with respect to a line through its midpoint and orthogonal to it. 
\item[{\tt I} edge:] 
This edge must be a straight line. 
\end{itemize}
Each template also specifies how the tiles are arranged, where the tiles must be placed such that the arrowheads with the same color and shape coincide with each other as shown in Fig. \ref{fig:template}(b).

\begin{figure} [t] 
\centering
\includegraphics[scale=0.45,keepaspectratio,clip]{./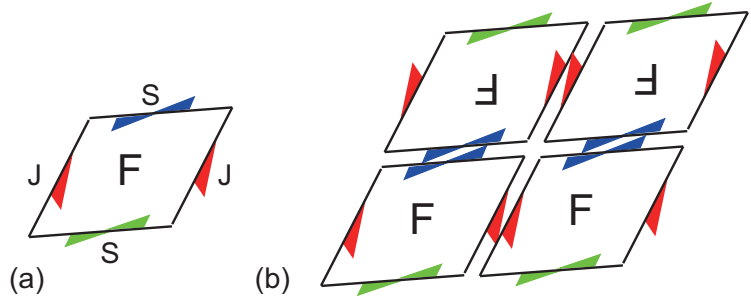}
\caption{(a) Template of IH47 and (b) the adjacency relationship between the tiles}
\label{fig:template}
\end{figure} 

The 93 isohedral types can be classified into the nine most general isohedral types \cite{Schattschneider2004} shown in Fig. \ref{fig:template_all}. Any isohedral type can be obtained from one of these nine isohedral types by removing one or more tiling edges and/or by replacing {\tt J} and {\tt S} edges with more symmetrical edge types ({\tt I} and {\tt U} edges). 

\begin{figure} [t] 
\centering
\includegraphics[scale=0.50,keepaspectratio,clip]{./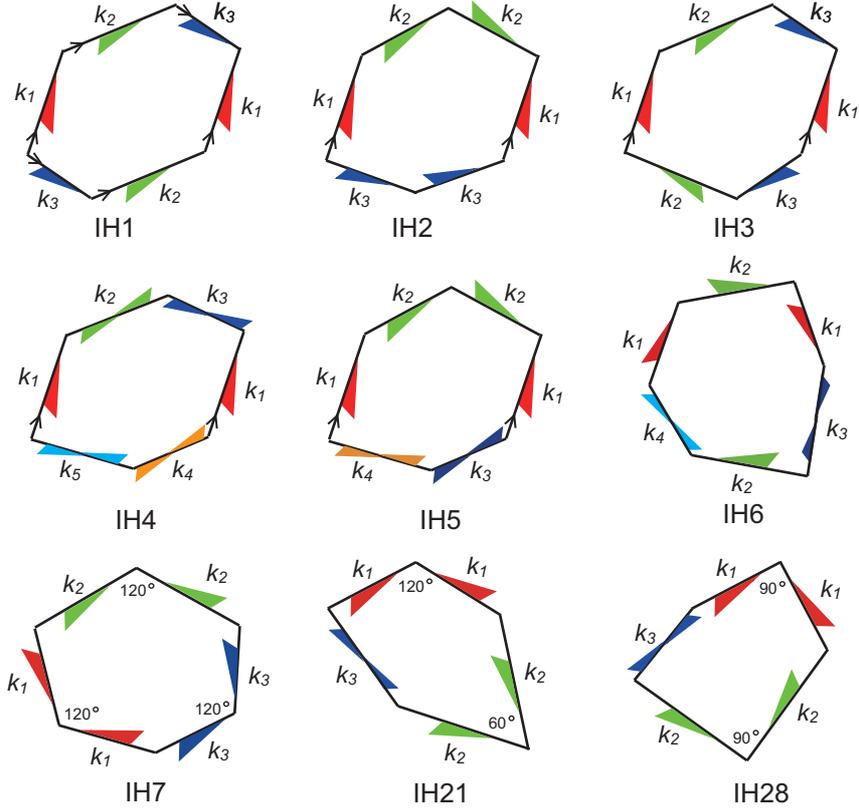}
\caption{Templates of the nine most general isohedral types. Two opposite {\tt J} edges marked with $\wedge$ are parallel to each other. Consecutive {\tt J} edges must form a specified angle if designated. Otherwise, each of the pairs of J edges is glide-reflection symmetric with respect to the $x$-axis or $y$-axis.} 
\label{fig:template_all}
\end{figure} 

\subsection{Koizumi and Sugihara's formulation and its extensions}   \label{sec:2_parameterization}

Koizumi and Sugihara \cite{koizumi2011maximum} modeled the tile shape as an $n$-point polygon. The tile shape (polygon) $U$ is constrained to form an isohedral tile and the constraint conditions depend on the isohedral type. For example, the template of IH47 is represented as shown in Fig.\ref{fig:template_point_IH47}, where exactly one point must be placed at each of the tiling vertices (black circles) and the remaining points are placed on the tiling edges (white circles). We denote the numbers of points placed on the tiling edges as $k_1, k_2, \dots$ as illustrated in the figure. This template represents the possible positions of the $n$ points for IH47; for example, the $n$ points can be moved as illustrated in the figure. In Koizumi and Sugihara's original formulation, the same number of points were assigned to every tiling edge, i.e., $k_1 = k_2 = \dots \ $. Subsequently, Imahori and Sakai \cite{imahori2013escher} extended this formulation by assigning different numbers of points to the tiling edges, which allows considerable flexibility in the possible tile shapes. 

\begin{figure} [t] 
\centering
\includegraphics[scale=0.60,keepaspectratio,clip]{./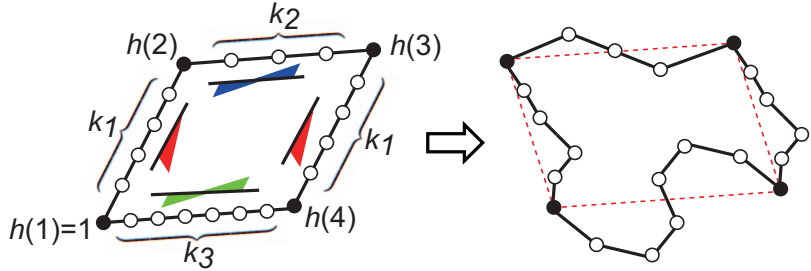}
\caption{Template of IH47 for a specific assignment of the points to the tiling edges (left), and an example of a possible tile shape (right) }
\label{fig:template_point_IH47}
\end{figure} 

The $n$ points on a template are indexed clockwise by $1, 2, \dots, n$, starting from one of the tiling vertices. Let $\bm{\hat{u}}_i = (x_i, y_i)^{\top}$ be the coordinates of the $i$th point of the tile shape $U$. The tile shape $U$ is then represented as a $2n$-dimensional vector $\bm{u} = {(x_1, x_2, \dots, x_n, y_1, y_2, \dots, y_n)}^{\top}$. For each of the isohedral types, constraint conditions imposed on the possible tile shapes have linearity. For example, from the template of IH47, the coordinates of the $n$ points must satisfy the following equation:
\begin{equation}
\label{eq:A_IH47}
\left\{
\begin{array}{llll}
\! \bm{ \hat{u}}_{h(1)+i} - \bm{ \hat{u}}_{h(1)} & \!\!=\!\! & \bm{ \hat{u}}_{h(4)-i} - \bm{ \hat{u}}_{h(4)}     & (i=1, \dots, k_1+1) \\
\! \bm{ \hat{u}}_{h(2)+i} - \bm{ \hat{u}}_{h(2)} & \!\!=\!\! & -(\bm{ \hat{u}}_{h(3)-i} - \bm{ \hat{u}}_{h(3)})  & (i=1, \dots, \lfloor \frac{k_2+1}{2} \rfloor) \\
\! \bm{ \hat{u}}_{h(4)+i} - \bm{ \hat{u}}_{h(4)} & \!\!=\!\! & -(\bm{ \hat{u}}_{h(5)-i} - \bm{ \hat{u}}_{h(1)})  & (i=1, \dots, \lfloor \frac{k_3+1}{2} \rfloor)
\end{array}%
\right.,
\end{equation}
where $h(s) \ (s =1, \dots, 4)$ is the index of the $s$th tiling vertex and $h(5)$ is defined as $n+1$. 

Eq.~\ref{eq:A_IH47} is a homogeneous system of linear equations, and let $\bm{b_1}, \bm{b_2}, \dots, \bm{b_m}$ be a set of $m$ linearly independent solutions of this equation ($m$ is the degree of freedom of this system). A general solution to Eq.~\ref{eq:A_IH47} is then given by
\begin{equation}
\label{eq:u=Bxi}
\bm{u} = \xi_1 \bm{b_1} + \xi_2 \bm{b_2} + \cdots + \xi_m \bm{b_m} = B \bm{\xi},
\end{equation}
where $B=(\bm{b_1} \ \bm{b_2} \ \dots \ \bm{b_m})$ is a $2n \times m$ matrix and $\bm{\xi} = (\xi_1, \xi_2, \dots, \xi_m)^{\top}$ is a parameter vector. 
Here, we assume that the vectors $\bm{b_1}, \bm{b_2}, \dots, \bm{b_m}$ are mutually orthonormal. In fact, for every isohedral type, the tile shape $U$ can be parameterized in the form of Eq.~\ref{eq:u=Bxi}, where the matrix $B$ depends on the isohedral type and the assignment of the $n$ points to the tiling edges. 

The goal figure is also represented as an $n$-point polygon $W$ and the $n$ points are indexed clockwise by $1, 2, \dots, n$. Let $\bm{\hat{w}}_i = (x^w_i, y^w_i)^{\top}$ be the coordinates of the $i$th point of $W$. The goal shape $W$ is then represented as a $2n$-dimensional vector $\bm{w} = {(x^w_1, x^w_2, \dots, x^w_n, y^w_1, y^w_2, \dots, y^w_n)}^{\top}$. 

Koizumi and Sugihara employed the Procrustes distance \cite{werman1995similarity} to measure the similarity between the two polygons $U$ and $W$. Let the terms $U$ and $W$ be also used to denote $2 \times n$ matrices defined by 
\begin{equation}
\label{eq:U_W}	
U = \left(
    \begin{array}{cccc}
      x_1 & x_2 & \dots & x_n \\
      y_1 & y_2 & \dots & y_n 
    \end{array}
  \right)
, \ \
W = \left(
    \begin{array}{cccc}
      x^w_1 & x^w_2 & \dots & x^w_n \\
      y^w_1 & y^w_2 & \dots & y^w_n 
    \end{array}
  \right).
\end{equation}
Let $\|X\|$ be the Frobenius norm of a matrix $X$. The Procrustes distance $d_P(U,W)$ is defined as follows: 
\begin{equation}
\label{eq:procrustes}
d_P^2(U,W) = \min_{s, \theta} {\left\| sR(\theta) U - \frac{W}{\|W\|}  \right\|}^2, 
\end{equation}
where $s$ is a scalar expressing expansion (or contraction) and $R(\theta)$ is the rotation matrix by angle $\theta$. From the definition, the Procrustes distance is scale and rotation-invariant\footnote{According to the original definition, $U$ and $W$ are assumed to have the same centroid and this turns out to be the optimal translation of $U$ for minimizing the distance value.}. The property of rotation-invariance is indispensable\footnote{The property of scale-invariance is not necessary because the parameterized tile shape $U$ can be of any size. Therefore, a distance measure defined as $\min_{\theta} {\left\| R(\theta) U - W  \right\|}^2$ yields the same result.} for some isohedral types. For example, the two consecutive {\tt J} edges (specified by the green arrows) in the template of IH5 (see Fig. \ref{fig:template_all}) must be parameterized such that they make equal and opposite angles with the $x$-axis (or $y$-axis). Therefore, the parameterized tile shape $U$ can only appear in a specific orientation. This assumption is required to express the constraint conditions as linear equations for IH2, IH3, IH5, and IH6. For other isohedral types (IH1, IH4, IH7, IH21, and IH28), the same result is obtained (after the optimization procedure described later) except for the size if a more simple distance measure defined as 
\begin{equation}
\label{eq:euclid}
d_E^2(U,W) = {\left\| U - W \right\|}^2 = {\left\| \bm{u}-\bm{w} \right\|}^2
= \sum_{i=1}^n {\left\| \bm{\hat{u}_i}-\bm{\hat{w}_i} \right\|}^2 
\end{equation}
is used because the tile shapes are parameterized such that they can appear in any orientation \cite{nagata2019}. We refer to $d_E(U,W)$ as the Euclidean distance. 

When the Euclidean distance can be used, from Eqs.~\ref{eq:u=Bxi} and \ref{eq:euclid}, the Escherization problem is formulated as the following unconstrained optimization problem:
\begin{equation}
\label{eq:formulation_Euclid}
\argmin_{\bm{\xi}} \ {\left\| B \bm{\xi}-\bm{w} \right\|}^2. 
\end{equation}
This is a least-squares problem and the solution is given by 
$\bm{\xi^*} = {(B^{\top} B)}^{-1} B^{\top} \bm{w} = B^{\top} \bm{w}$ with the minimum value $-{\bm{\xi^*}}^{\top} \bm{\xi^*} + {\bm{w}}^{\top} \bm{w}$. The coordinates of the optimal tile shape $\bm{u^*}$ are obtained by $\bm{u^*} = B \bm{\xi^*}$. Note that the optimal tile shape occasionally has self-intersection(s) because the constraint conditions expressed as Eq.~\ref{eq:u=Bxi} do not exclude this case. Therefore, such a tile shape must be discarded. 

When calculating the Euclidean (or Procrustes) distance between the two polygons $U$ and $W$, we need to consider $n$ different numbering schemes for the goal polygon $W$. We denote by $W_j \ (j=1, 2, \dots, n)$ the goal polygon that is renumbered starting from the $j$th point in the original numbering. The coordinates of $W_j$ are represented as a $2n$-dimensional vector $\bm{w_j}$ in the same way as $\bm{w}$. Sometimes we will denote $W_j$ ($\bm{w_j}$) simply as $W$ ($\bm{w}$) for simplicity. 

Let $I$ be a set of indices for the isohedral types and $K_i \ (i \in I)$ a set of all possible configurations for the assignment of the $n$ points to the tiling edges of the template for an isohedral type IH$i$. For example, from the template of IH47 (see Fig. \ref{fig:template_point_IH47}), $K_{47}=\{(k_1, k_2) \mid 0 \leq k_1, 0 \leq k_2, 2 k_1 + k_2 \leq n-4 \}$, where $k_3$ is determined by $k_3=n-4-(2 k_1+k_2)$. As the matrix $B$ depends on $i \in I$ and $k \in K_i$, we denote it as $B_{ik}$. Let $J = \{1, 2, \dots, n\}$ be a set of the indices of the start point for the $n$ different numbering schemes of the goal polygon $W$. If we attempt to perform an exhaustive search, we need to solve the optimization problem 
\begin{equation}
\label{eq:formulation_ikj_Euclid}
\argmin_{\bm{\xi}} \ {\left\| B_{ik} \bm{\xi}-\bm{w_j} \right\|}^2, 
\end{equation}
for all possible combinations of $i \in I$, $k \in K_i$, and $j \in J$, and then select the best solution. For each triplet $(i,k,j)$, the optimal value is computed by
\begin{equation}
\label{eq:eval_ikj_Euclid}
eval_{ikj}^E = -{\bm{\xi^*_{ikj}}}^{\top} \bm{\xi^*_{ikj}} + {\bm{w}}^{\top} \bm{w}, 
\end{equation}
where $\bm{\xi^*_{ikj}} = {B_{ik}}^{\top} \bm{w_j}$.
We refer to this as the {\it exhaustive search of the templates}. Note that for some isohedral types (IH2, IH3, IH5, and IH6), we need to use the Procrustes distance and the optimization problem (\ref{eq:formulation_ikj_Euclid}) is modified into an eigenvalue problem. The case where the Procrustes distance must be used is described in Appendix A.

The exhaustive search of the templates was very time-consuming because (i) the order of $K_i$ is $O(n^3)$ for IH5 and IH6 and $O(n^4)$ for IH4 (see Fig. \ref{fig:template_all}), and (ii) for each combination of $i$ and $k$, it took $O(n^3)$ time for computing ${eval}_{ikj}^E$ for all $j \in J$ \cite{imahori2013escher,imahori2015escher}. As a compromise, Imahori and Sakai \cite{imahori2013escher} proposed a local search algorithm to search only promising configurations of $(k,j) \in (K_i, J)$ for each isohedral type $i \in I$. Recently, Nagata and Imahori \cite{nagata2019} developed an efficient exhaustive search algorithm for the nine most general isohedral types (the outline is described in Section \ref{sec:4_Euclidean}), where it takes $O(n^2)$ time for computing ${eval}_{ikj}^E$ for all $j \in J$ and unnecessary computation of $eval^E_{ikj}$ are avoided. The exhaustive search of the templates can now be performed in a reasonable computation time.

\section{Escherization problem with the proposed distance functions} \label{sec:3}

This section proposes several distance functions suitable for the Escherization problem and these are incorporated into the exhaustive search of the templates. As stated in Section \ref{sec:2_parameterization}, we can use the Procrustes distance (Eq.~\ref{eq:procrustes}) for all isohedral types, but the Euclidean distance (Eq.~\ref{eq:euclid}) can be used for some isohedral types without changing the result. In this section, the proposed distance functions are described based on the case where the Euclidean distance is available because it is better for intuitive understanding and the essence of the mathematical structure does not change in the case of the Procrustes distance. Appendix A describes how the proposed distance functions and the formulated problems are modified when the Procrustes distance must be used.

\subsection{A general case}  \label{sec:3_general}

We consider a squared distance function $d_G^2(U,W)$ given by the quadratic form of the vector ${\bm u} - {\bm w}$ associated with a matrix $G$ defined as follows:
\begin{equation}
\label{eq:mahalanobis}
d_G^2(U,W) = ({\bm u} - {\bm w})^{\top} G ({\bm u} - {\bm w}).
\end{equation}
If $G$ is an identity matrix, $d_G^2(U,W)$ is equivalent to the squared Euclidean distance $d_E^2(U,W)$. The matrix $G$ can be any symmetric positive semi-definite matrix (of size $2n$) to fulfill the axiom of the distance (non-negativity, identity of indiscernibles, symmetry, and triangle inequality). As described later, it is possible to incorporate this distance into the exhaustive search of the templates. As alternative distance metrics expected to be suitable for the Escherization problem, we design several distance functions by specifying the value of $G$. 

\subsection{Weighted Euclidean distance} \label{sec:3_WE}

In our preliminary study \cite{imahori2015escher}, we introduced weights to the Procrustes distance to emphasize the similarity with important parts of goal polygons selected by the user. Here, we explain this concept in the Euclidean distance case (see Appendix A for the Procrustes distance case). 

Let $k_i$ be a positive weight assigned to the $i$th point of the goal polygon $W$. The weighted Euclidean (WE) distance $d_{WE}(U,W)$ is then defined by 
\begin{equation}
\label{eq:euclid_weight}
d_{WE}^2(U,W) = \sum_{i=1}^n k_i { \left\|{\bm{\hat{u}}_i} - {\bm{\hat{w}}_i}  \right\| }^2. 
\end{equation}
The WE distance can be regarded as a special case of $d_G^2(U,W)$, where the matrix $G$ is a $2n \times 2n$ diagonal matrix whose diagonal elements are given by $k_1, \dots, k_n, k_1, \dots, k_n$. Note that the matrix $G$ depends on $j \ \in J$ because the indices of the weighted points must be shifted depending on the start point of the numbering of the goal polygon $W$. Therefore, we need to define $G_j$ individually for the goal polygons $W_j \ (j=1, 2, \dots, n)$. 

\subsection{Adjacent difference (AD) distance} \label{sec:3_AD}

As expected from the definition of the (weighted) Euclidean distance (Eqs.~\ref{eq:euclid} and \ref{eq:euclid_weight}), the positions of the $n$ points of the tile polygon $U$ must be close to the corresponding positions of the goal polygon $W$ to reduce the distance value. On the other hand, it is also natural to measure the similarity between the two polygons based on the ``relative position'' of adjacent points rather than based on the ``absolute position''.  For example, this type of similarity measure is commonly used in the research field of the 3D surface modeling and its editing \cite{alexa2003differential,lipman2004differential,sorkine2004laplacian}. 

In our preliminary study \cite{nagata2018escherization}, we introduced a distance function, which focused on the similarity of the relative positional relationship of adjacent points between the two polygons, into the Escherization problem. This distance function is defined as follows: 
\begin{equation}
\label{eq:adjacency}
d_{AD}^2(U,W) = \sum_{i=1}^n { \left\| (\bm{\hat{u}_{i+1}}-\bm{\hat{u}_i}) - (\bm{\hat{w}_{i+1}}-\bm{\hat{w}_i})   \right\| }^2,  
\end{equation}
where $n+1$ is regarded as 1. We refer to $d_{AD}(U,W)$ as the {\it adjacent difference} (AD) distance. Fig. \ref{fig:normal_vs_adjacent} shows a typical example in which the AD distance is effective; the tile figure (b) is most similar to the goal polygon ``bat'' (figure (a)) in terms of the Euclidean distance and the tile figure (c) is most similar to the goal polygon in terms of the AD distance. Note that both tile shapes (red lines) are drawn on the goal polygon (black points) such that the Euclidean distance is minimized. 

Compared with the tile shape (b), the tile shape (c) does not overlap significantly with the goal polygon, but it appears to be intuitively more similar to the goal polygon. The reason is that the local shape of the contour of the goal polygon is well preserved in the tile shape (c) even though the overall structure is distorted  (e.g. the vertical width of the wings becomes narrower). As exemplified in this figure, even if the global structure is somewhat distorted, it would be better to preserve the local structures of the goal shape actively to search for more satisfactory tile shapes. 

\begin{figure} [t] 
\centering
\includegraphics[scale=0.30,keepaspectratio,clip]{./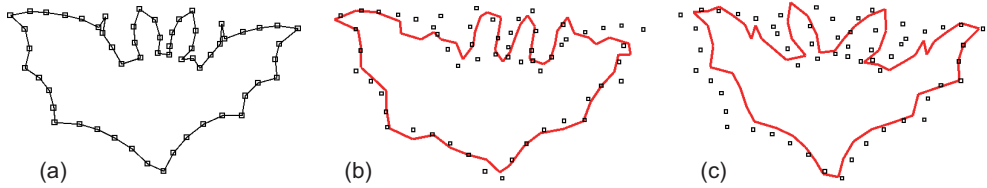}
\caption{(a) Goal polygon ``bat'' and tile shapes that are most similar to the goal polygon in terms of the Euclidean distance (b) and the AD distance (c).}
\label{fig:normal_vs_adjacent}
\end{figure} 

As a straightforward extension of the AD distance, the weighted AD (WAD) distance is defined by 
\begin{equation}
\label{eq:weight_AD}
d_{WAD}^2(U,W) = \sum_{i=1}^n k_i { \left\| (\bm{\hat{u}_{i+1}}-\bm{\hat{u}_i}) - (\bm{\hat{w}_{i+1}}-\bm{\hat{w}_i})   \right\| }^2,  
\end{equation}
where $n+1$ is regarded as 1. The values of $k_i \ (i = 1, 2, \dots, n)$ are weights assigned to the edges between the $i$th and $(i+1)$th points of the goal polygon $W$. The (W)AD distance can be defined as a special case of $d_G^2(U,W)$ and the matrix $G$ for the (W)AD distance is given by a $2n \times 2n$ symmetric matrix defined by 
$
G = 
\begin{pmatrix}
K & O \\
O & K \\
\end{pmatrix}
;
$
$K = (k(i,j))$ is an $n \times n$ matrix whose non-zero elements are given by 
\begin{equation}
\label{eq:G_WAD}
k(i,j) = 
\left\{
\begin{array}{lll}
-k_{i-1}        & (j=i-1) \\
k_{i-1} + k_i   & (j=i)   \  \ \ \ \ \ \ \ (i = 1, 2, \dots, n) \\ 
-k_i            & (j=i+1) \\
\end{array}%
\right.
,
\end{equation} 
where $j=n+1 \ (\mbox{resp.} \  -1)$ is regarded as $1 \ (\mbox{resp.} \ n)$.
Note that, when weights are introduced, the matrix $G$ depends on $j \in J$ (as in the case of the WE distance) because the indices of the weighted edges must be shifted depending on the start point of the numbering of the goal polygon $W$. Therefore, we need to define $G_j$ individually for the goal polygons $W_j \ (j=1, 2, \dots, n)$ when using the WAD distance.

\subsection{Generalized AD distance}  \label{sec:3_GAD}

We further design other distance functions that emphasize the similarity of the local structures between the two polygons $U$ and $W$ in different manners than the (W)AD distance. Unlike the idea of the (W)AD distance, it is also natural to emphasize the similarity of the relative positional relationship of {\it close} points (without restricting adjacent points), and two distance functions are designed from this point of view. 

Fig. \ref{fig:example_GAD} illustrates the concept of the proposed distance functions. For a goal polygon ``pegasus'' (figure (a)), we focus on the relative positional relationship of close points specified by the undirected edges (including those that constitute the goal polygon) illustrated in the figure (b) or (c). Let $E_c$ be a set of the selected edges, and the proposed distance function is defined as follows:
\begin{equation}
\label{eq:GAD}
d_{GAD}^2(U,W) = \sum_{(i,j) \in E_c} { \left\| (\bm{\hat{u}_i}-\bm{\hat{u}_j}) - (\bm{\hat{w}_i}-\bm{\hat{w}_j})   \right\| }^2. 
\end{equation}
We refer to $d_{GAD}(U,W)$ as the generalized AD (GAD) distance. 

We design two types of the GAD distance, which are referred to as GAD1 and GAD2. Let $d_{ave}$ be the average straight-line distance over the edges of the goal polygon $W$. For the GAD1 and GAD2 distances, $E_c$ is defined respectively as follows.
\begin{itemize}
\setlength{\itemindent}{15pt}   
\item[GAD1:]
The set $E_c$ is defined as the union of the edges of the goal polygon $W$ and the line segments between two points of $W$ whose lengths are less than $\gamma d_{ave}$ ($\gamma$ is a parameter), where only line segments inside the goal polygon $W$ are selected. If some of the selected line segments intersect, all intersections are removed by eliminating longer ones, except that the edges of the goal polygon are left with the highest priority. 
\item[GAD2:]
The set $E_c$ is defined in the same way as in the GAD1 distance, but line segments both inside and outside the goal polygon $W$ are selected. 
\end{itemize}

Fig. \ref{fig:example_GAD} shows examples of the sets $E_c$ for the GAD1 and GAD2 distances for a given goal polygon, where $\gamma = 1.4$. The GAD2 distance is motivated to consider the local structural similarity both inside and outside the goal polygon. We simply refer to the GAD1 and GAD2 distances as the GAD distance when they are not distinguished.

\begin{figure} [t] 
\centering
\includegraphics[scale=0.30,keepaspectratio,clip]{./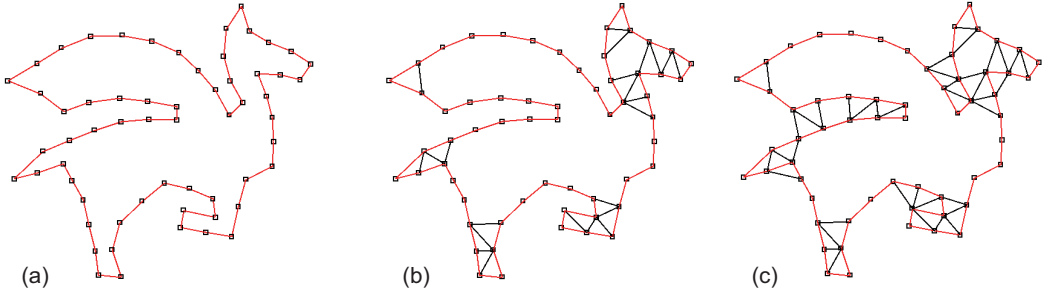}
\caption{(a) Goal polygon pegasus, (b) set $E_c$ of the GAD1 distance, and (c) set $E_c$ of the GAD2 distance ($\gamma=1.4$)}
\label{fig:example_GAD}
\end{figure} 

The GAD distance can be represented as a special case of $d_G^2(U,W)$. Let $A$ and $D$ be the adjacency matrix and the degree matrix, respectively, of the graph defined by the points of the goal polygon and $E_c$. The matrix $G$ is then expressed as a $2n \times 2n$ symmetric matrix defined by 
$
G = 
\begin{pmatrix}
K & O \\
O & K \\
\end{pmatrix}
$
, where $K = D - A$ (i.e., the Laplacian matrix of the graph).
As in the cases of the WE and WAD distances, the matrix $G$ depends on $j \in J$, and we need to define $G_j$ individually for the goal polygons $W_j \ (j=1, 2, \dots, n)$.

\subsection{Exhaustive search of the templates} \label{sec:3_formulation}

From Eqs.~\ref{eq:u=Bxi} and \ref{eq:mahalanobis}, to perform the exhaustive search of the templates combined with the distance function $d_G^2(U,W)$, we need to solve the following optimization problem: 
\begin{equation}
\label{eq:formulation_general}
\argmin_{\bm{\xi}} \ {\bm \xi}^{\top} {B_{ik}}^{\top} G_j B_{ik} {\bm \xi} - 2{\bm w_j}^{\top} G_j B_{ik} {\bm \xi} + {\bm w_j}^{\top} G_j {\bm w_j},
\end{equation}
for all combinations of $i \in I$, $k \in K_i$, and $j \in J$. Here, we need to recall that (i) the matrix $B$ depends on $(i,k)$ and is denoted as $B_{ik}$, (ii) ${\bm w_j}$ represents the coordinates of the goal polygon $W_j$, and (iii) the matrix $G$ may depend on $j$ and is denoted as $G_j$. In Appendix A, we explain how this formulation is modified when the Procrustes distance must be used. 

The optimization problem (\ref{eq:formulation_general}) is simply the minimization of a quadratic function for each combination of $i \in I$, $k \in K_i$, and $j \in J$. We can obtain the solution to this optimization problem by solving the linear equation 
\begin{equation}
\label{eq:equation_general}
{B_{ik}}^{\top} G_j B_{ik} {\bm \xi} - {B_{ik}}^{\top} G_j {\bm w_j} = 0.
\end{equation}
We can solve this equation even if the matrix ${B_{ik}}^{\top} G_j B_{ik}$ is not regular, but it is easier and faster to solve it in the regular case. In fact, this matrix is not regular when the (W)AD and GAD distances are used. This is intuitively apparent because these distances are invariant to the translation of the tile shape $U$, and therefore the solution to Eq.~\ref{eq:equation_general} is not uniquely determined. A simple method to determine the solution uniquely is to add 1.0 (or any positive number) to $k(1, 1)$ (the top-left element of the matrix $K$), which is equivalent to add the square of the straight-line distance between the first points of $U$ and $W_j$ to the original distance function. Consequently, the solution is determined uniquely without changing the optimal value and the optimal tile shape, and therefore ${B_{ik}}^{\top} G_j B_{ik}$ becomes a regular matrix (symmetric positive-definite matrix in this case). We solved Eq.~\ref{eq:equation_general} using the Cholesky decomposition of the matrix ${B_{ik}}^{\top} G_j B_{ik}$ (it takes $O(n^3)$ time). Let ${\bm {\xi_{ikj}^*}}$ be the solution of Eq.~\ref{eq:equation_general}. The optimal value of the optimization problem (\ref{eq:formulation_general}) is then given by 
\begin{equation}
\label{eq:eval_ikj_general}
eval_{ikj}^G = -{\bm{\xi^*_{ikj}}}^{\top} {B_{ik}}^{\top} G_j B_{ik} \bm{\xi^*_{ikj}} + {\bm{w_j}}^{\top} G_j \bm{w_j}. 
\end{equation}

For each combination of $i \in I$ and $k \in K_i$, it takes $O(n^4)$ time for computing $eval_{ikj}^G$ for all $j \in J$ because it takes $O(n^3)$ time for solving Eq.~\ref{eq:equation_general}. If the WE, WAD, or GAD distance is used as a special case of $d_G^2(U,W)$, the time complexity is the same as in the general case. In contrast, if the AD distance is used, the matrix $G$ does not depend on $j \in J$ and we can reduce the time complexity. In this case, a set of the column vectors $\bm{b_1}, \bm{b_2}, \dots, \bm{b_m}$ consisting of the matrix $B_{ik}$ (see Eq.~\ref{eq:u=Bxi}) are linearly transformed into $\bm{b'_1}, \bm{b'_2}, \dots, \bm{b'_{m}}$ such that ${\bm b'_r}^{\top} G {\bm b'_s} = \delta_{rs}$ (the Kronecker delta function) for $r,s \in \{1, 2, \dots, m\}$. Such a set of column vectors can be obtained in $O(n^3)$ time by using the Gram-Schmidt orthogonalization process with an inner product defined as $<{\bm x}, {\bm y}> = {\bm x}^{\top} G {\bm y}$. Let a matrix $B'_{ik}$ be defined as $B'_{ik}=(\bm{b'_1}, \bm{b'_2}, \dots, \bm{b'_{m}})$ and the tile shape $U$ be parameterized by $\bm{u} = B'_{ik} \bm{\xi}$. The solution of Eq.~\ref{eq:equation_general} with $B_{ik}$ replaced with $B'_{ik}$ is then computed by ${\bm {\xi_{ikj}^*}} =  {B'_{ik}}^{\top} G {\bm w_j}$ for $j \in J$ (because ${B'_{ik}}^{\top} G_j B'_{ik}$ is the identity matrix). Therefore, for the AD distance, it takes $O(n^3)$ time for computing $eval_{ikj}^G$ for all $j \in J$.

\section{Efficient exhaustive search algorithms} \label{sec:4}

When the WE, AD, WAD, and GAD distances are introduced, the exhaustive search of the templates, i.e., solving the optimization problem (\ref{eq:formulation_general}) for all combinations of $i \in I$, $k \in K_i$, and $j \in J$, is very time-consuming. In this section, we propose efficient exhaustive search algorithms for these distance functions. Recently, Nagata and Imahori \cite{nagata2019} developed an efficient exhaustive search algorithm for the Euclidean (and the Procrustes) distance case. The efficient exhaustive search algorithms developed for the proposed distance functions are partially based on this algorithm. Therefore, the outline of this algorithm is first described and then the efficient exhaustive search algorithms developed for other distance functions are presented. 

As in Section \ref{sec:3}, the necessary explanation is provided based on the case where the Euclidean distance is available. However, the structure of the algorithms presented in this section does not change when the Procrustes distance must be used. 

\subsection{The Euclidean distance case} \label{sec:4_Euclidean}

When the Euclidean distance is used, the exhaustive search of the templates, i.e., solving the optimization problem (\ref{eq:formulation_ikj_Euclid}) for all combinations of $i \in I$, $k \in K_i$, and $j \in J$, can be performed very efficiently using the following three fundamental techniques \cite{nagata2019}. 
\begin{enumerate}
\item
An $O(n)$ time construction method of the matrix $B_{ik}$. 
\item
An $O(n)$ time calculation method of $eval^E_{ikj}$. 
\item
An efficient exhaustive search algorithm, which skips unnecessary computation of $eval^E_{ikj}$. 
\end{enumerate}
Owing to these techniques, the developed algorithm took only 0.55 s and 9.01 s for 60 and 120-point goal polygons, respectively, to perform the exhaustive search on a standard modern PC. 

Regarding the first technique, if the orthogonality between the column vectors of $B_{ik}$ is not required, it is not difficult to construct $B_{ik}$ in $O(n)$ time \cite{nagata2019}. This orthogonality is not required for the exhaustive search algorithms developed for the proposed distance functions.

The second technique was developed for the Procrustes distance. When the Euclidean distance is applicable to the selected isohedral type, it is almost trivial to compute $eval^E_{ikj}$ (Eq.~\ref{eq:eval_ikj_Euclid}) in $O(n)$ time using an appropriate sparse matrix format because the matrix $B_{ik}$ constructed using the first technique is a sparse matrix whose order of non-zero elements is $O(n)$. 

Regarding the third technique, an efficient method of calculating a lower bound on the value of $eval^E_{ikj}$ was developed to skip unnecessary computation of $eval^E_{ikj}$. This technique was applied to three isohedral types IH4, IH5, and IH6 because the order of $K_i$ is greater than or equal to $O(n^3)$ only for these isohedral types, and the exhaustive search of the templates spends most of the computation time for these isohedral types. Algorithm \ref{alg:E} depicts an outline of the efficient exhaustive search algorithm. 

\begin{algorithm}[t]
\caption{\sc Outline of the efficient exhaustive search with the Euclidean distance}
\label{alg:E}
\nl $eval_{mim} := \infty$\;
\nl \For{$i \in I$ (the nine most general isohedral types)}{
  \nl \For{$k \in K_i$}{
    \nl \For{$j = 1, 2, \dots, n$}{ 
      \nl $eval^E := \infty$\;
      \nl \If{$i = 4,5, \ \mbox{or} \ 6$}{
        \nl Compute a lower bound on $eval^E_{ikj} \rightarrow L$\;
        \nl \If{$L < eval_{min}$}{
          \nl Compute $eval^E_{ikj} \rightarrow eval^E$\;
        }
      }
      \nl \Else{
        \nl Compute $eval^E_{ikj} \rightarrow eval^E$\;
      }
      \nl \If{$eval^E < eval_{min}$}{ 
        \nl $eval_{min} := eval^E$\;
        \nl Update the current best tile shape $\bm{u^*}$\;
      }
    }
  }
}
\nl \Return $\bm{u^*}$\;
\end{algorithm}

In Algorithm \ref{alg:E}, the value of $eval^E_{ikj}$ is computed only when necessary (line 9) for IH4, IH5, and IH6 because there is no possibility of updating the current best tile shape $\bm{u^*}$ if the computed lower bound $L$ (line 7) is greater than or equal to the current best value $eval_{mim}$. For a complete understanding of this algorithm, we need to know how to compute a lower bound on $eval^E_{ikj}$ efficiently. For further details, see Section 4 of \cite{nagata2019}. 


\subsection{The WE distance case} \label{sec:4_weight}

When the WE distance (Eq.~\ref{eq:euclid_weight}) is used, we can construct an efficient algorithm for the exhaustive search of the templates by utilizing Algorithm 1 in a simple way. Recall that, when the WE distance is used, the matrix $G \ (=G_1)$ is the $2n \times 2n$ diagonal matrix whose diagonal elements are given by $k_1, \dots, k_n, k_1, \dots, k_n$, and $G_j$ is obtained from $G$ by shifting the diagonal elements accordingly (see Section \ref{sec:3_WE}).

Let $eval^W_{ikj}$ be the optimal value of the optimization problem (\ref{eq:formulation_general}) for each combination of $i \in I$, $k \in K_i$, and $j \in J$ when the WE distance is used. The basic idea is to utilize $eval^E_{ikj}$ as a lower bound on $eval^W_{ikj}$, and we have the following theorem.
\begin{theorem}
\label{th:1}
$eval^E_{ikj} \leq eval^W_{ikj}$ for every combination of $i \in I$, $k \in K_i$, and $j \in J$, if all weights $k_1, k_2, \dots, k_n$ are greater than or equal to one. 
\end{theorem}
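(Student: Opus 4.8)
The plan is to compare the two minimization problems directly. Fix a triplet $(i,k,j)$ and write $B = B_{ik}$, $\bm{w} = \bm{w_j}$, and let $G_j$ be the diagonal weight matrix for the WE distance. Note that $eval^E_{ikj} = \min_{\bm{\xi}} \|B\bm{\xi} - \bm{w}\|^2$ and $eval^W_{ikj} = \min_{\bm{\xi}} (B\bm{\xi} - \bm{w})^{\top} G_j (B\bm{\xi} - \bm{w})$, where in both cases we think of $\bm{u} = B\bm{\xi}$ ranging over the same linear subspace of feasible tile shapes. The key observation is pointwise: for \emph{any} vector $\bm{v} = B\bm{\xi} - \bm{w}$, since $G_j$ is diagonal with entries $k_1,\dots,k_n,k_1,\dots,k_n$ all $\geq 1$, we have $\bm{v}^{\top} G_j \bm{v} = \sum_i k_i |v_i|^2 \geq \sum_i |v_i|^2 = \|\bm{v}\|^2$. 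That is, $d_E^2(U,W) \leq d_{WE}^2(U,W)$ for every feasible $U$.

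Next I would take minima over the common feasible set $\{B\bm{\xi} : \bm{\xi} \in \mathbb{R}^m\}$. Let $\bm{\xi^*_W}$ be the minimizer of the WE problem, achieving $eval^W_{ikj}$. Applying the pointwise inequality at $\bm{u} = B\bm{\xi^*_W}$ gives
\begin{equation}
eval^E_{ikj} = \min_{\bm{\xi}} \|B\bm{\xi} - \bm{w}\|^2 \;\leq\; \|B\bm{\xi^*_W} - \bm{w}\|^2 \;\leq\; (B\bm{\xi^*_W} - \bm{w})^{\top} G_j (B\bm{\xi^*_W} - \bm{w}) \;=\; eval^W_{ikj},
\end{equation}
which is the claim. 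The first inequality holds because $\bm{\xi^*_W}$ is merely one feasible choice for the Euclidean problem; the second is the pointwise bound established above.

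The argument is essentially elementary and the main thing to be careful about is that the two problems really are minimized over the \emph{same} set of tile shapes $U$ — the parameterization $\bm{u} = B_{ik}\bm{\xi}$ does not change when the distance function changes, so this is immediate. One should also confirm the indexing convention: $G_j$ is obtained from $G_1$ by cyclically shifting the diagonal entries to match the renumbered goal polygon $W_j$, so it is still diagonal with the same multiset of entries $\{k_1,\dots,k_n\}$ (each appearing twice), and hence all entries are $\geq 1$ under the hypothesis. No positive-definiteness or regularity of $B^{\top}G_j B$ is needed for the inequality itself — only the existence of a minimizer of the WE problem, which is guaranteed since the objective is a (continuous, coercive after the top-left $+1$ adjustment is irrelevant here) quadratic bounded below by $0$. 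Thus the only real content is the diagonal pointwise bound, and there is no substantive obstacle.
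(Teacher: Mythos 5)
Your proof is correct and is essentially the paper's own argument: the paper decomposes $G_j = E + A$ with $A$ a positive semi-definite diagonal matrix and calls the conclusion trivial, which is exactly your pointwise bound $\bm{v}^{\top} G_j \bm{v} \geq \bm{v}^{\top} \bm{v}$ followed by taking minima over the common feasible set. You merely spell out the ``trivial'' step explicitly; there is no substantive difference.
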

\begin{proof}
The value of $eval^E_{ikj}$ is the minimum value of $({\bm u} - {\bm w_j})^{\top} E ({\bm u} - {\bm w_j})$ under the constraint ${\bm u} = B_{ik} {\bm \xi}$, where $E$ is the identity matrix. The value of $eval^W_{ikj}$ is the minimum value of $({\bm u} - {\bm w_j})^{\top} G_j ({\bm u} - {\bm w_j})$ under the same constraint. Let $G_j$ be decomposed into $E+A$, where $A$ is a diagonal matrix whose diagonal elements are zero or above. Given that both $E$ and $A$ are positive semi-definite matrices, this theorem is trivial. 
\end{proof}
\noindent

Using Theorem \ref{th:1}, an efficient exhaustive search algorithm combined with the WE distance is constructed based on Algorithm \ref{alg:E}. Algorithm \ref{alg:WE} depicts an outline of this algorithm; for each triplet $(i,k,j)$, the value of $eval_{ikj}^{W}$ is computed only when $eval_{ikj}^E$ (and its lower bound $L$) is less than the current best value $eval_{mim}$ (line 15). By using Algorithm \ref{alg:WE}, the number of computations of $eval^W_{ikj}$ is reduced drastically, especially when the total amount of weights that exceed 1.0 is small.

\begin{algorithm}[t]
\caption{\sc Outline of the efficient exhaustive search with the WE distance}
\label{alg:WE}
\nl $eval_{mim} := \infty$\;
\nl \For{$i \in I$}{
  \nl \For{$k \in K_i$}{
    \nl \For{$j = 1, 2, \dots, n$}{ 
      \nl $eval^E := \infty$, $eval^W := \infty$\;
      \nl \If{$i = 4,5, \ \mbox{or} \ 6$}{
        \nl Compute a lower bound on $eval^E_{ikj} \rightarrow L$\;
        \nl \If{$L < eval_{min}$}{
          \nl Compute $eval^E_{ikj} \rightarrow eval^E$\;
        }
      }
      \nl \Else{
        \nl Compute $eval^E_{ikj} \rightarrow eval^E$\;
      }
      \nl \If{$eval^E < eval_{min}$}{
        \nl Compute $eval^W_{ikj} \rightarrow eval^W$;
      }
      \nl \If{$eval^W < eval_{min}$}{ 
        \nl $eval_{min} := eval^W$\;
        \nl Update the current best tile shape $\bm{u^*}$\;
      }
    }
  }
}
\nl \Return $\bm{u^*}$\;
\end{algorithm}

\subsection{The AD distance case} \label{sec:4_AD}

When the AD distance (Eq.~\ref{eq:adjacency}) is used, we can construct an efficient algorithm for the exhaustive search of the templates based on Algorithm 1, but in a different way from that described in the previous subsection. 

Let $eval^{AD}_{ikj}$ be the optimal value of the optimization problem (\ref{eq:formulation_general}) for each combination of $i \in I$, $k \in K_i$, and $j \in J$ when the AD distance is used. Recall that the $xy$-coordinates of the tile polygon $U$ and the goal polygon $W$ are represented as $2n$-dimensional vectors $\bm{u} = {(x_1, x_2, \dots, x_n, y_1, y_2, \dots, y_n)}^{\top}$ and $\bm{w} = {(x^w_1, x^w_2, \dots, x^w_n, y^w_1, y^w_2, \dots, y^w_n)}^{\top}$. Then, we define $2n$-dimensional vectors $\overline{\bm u} = {(x_2-x_1, x_3-x_2, \dots, x_1-x_{n},}$ ${ y_2-y_1,y_3-y_2, \dots, y_1-y_n)}^{\top}$ and $\overline{\bm w} = {(x^w_2-x^w_1, x^w_3-x^w_2,}$ ${\dots, x^w_1-x^w_{n}, y^w_2-y^w_1, y^w_3-y^w_2, \dots, y^w_1-y^w_n)}^{\top}$. Let $\bm{\hat{\overline{u}}}_i$
and $\bm{\hat{\overline{w}}}_i$ be defined as $(x_{i+1}-x_i, y_{i+1}-y_i)^{\top}$ and $(x^w_{i+1}-x^w_i, y^w_{i+1}-y^w_i)^{\top}$, respectively, for $i = 1, 2, \dots n$, where $n+1$ is regarded as 1. The AD distance is then represented by
\begin{equation}
\label{eq:AD_transformed}
d_{AD}^2(U,W) = \sum_{i=1}^n { \left\| \bm{\hat{\overline{u}}_i} -  \bm{\hat{\overline{w}}_i} \right\| }^2 = {\left\| \overline{\bm u} - \overline{\bm w} \right\| }^2. 
\end{equation}
The right-hand side of this expression is equivalent to the definition of the Euclidean distance (Eq.~\ref{eq:euclid}) if $\overline{\bm u}$ and $\overline{\bm w}$ are replaced with ${\bm u}$ and ${\bm w}$, respectively. Therefore, if $\overline{\bm u}$ can be parameterized as $\overline{\bm u} = {\overline{B}_{ik}} {\bm \xi}$ with an appropriate matrix ${\overline{B}_{ik}}$ whose column vectors are mutually orthonormal as in Eq.~\ref{eq:u=Bxi}, the optimization problem (\ref{eq:formulation_general}) can be converted into the form of the optimization problem (\ref{eq:formulation_Euclid}) and we can compute $eval^{AD}_{ikj}$ as in Eq.~\ref{eq:eval_ikj_Euclid}. That is, we have $eval^{AD}_{ikj} = -{\bm{\xi^*_{ikj}}}^{\top} \bm{\xi^*_{ikj}} + {\overline{\bm w}}^{\top} \overline{\bm w}$, where $\bm{\xi^*_{ikj}} = {\overline{B}_{ik}}^{\top} \bm{\overline{w}_j}$. Here, $\bm{\overline{w}_j}$ is defined in a similar way as $\bm{w_j}$, i.e., it is obtained from $\overline{\bm w}$ by shifting the elements accordingly.

If the vector $\overline{\bm u}$ can be parameterized in the manner described above, we can directly apply Algorithm 1 to perform the exhaustive search with the AD distance, thus taking advantage of the efficiency of this algorithm. Owing to the third technique of Algorithm \ref{alg:E} (see Section \ref{sec:4_Euclidean}), we can skip unnecessary computation of $eval^{AD}_{ikj}$ because a lower bound on $eval^{AD}_{ikj}$ can be efficiently computed. 

The efficient exhaustive search algorithm with the AD distance is summarized as follows. 

\smallskip
\noindent
{\bf [Algorithm 3] Efficient exhaustive search with the AD distance} 
\begin{itemize}
\item
Algorithm 1 is performed by replacing the definitions of $\bm u$, $\bm w$, $\bm{w_j}$, and $B_{ik}$ with those of $\bm{\overline{u}}$, $\bm{\overline{w}}$, $\bm{\overline{w}_j}$, and $\overline{B}_{ik}$, respectively.  
\end{itemize}



For each combination of $i \in I$ and $k \in K_i$, the possible values for the vector $\overline{\bm u}$ can be parameterized as follows (indices $i$ and $k$ are omitted for simplicity).  Let $B' \ (=B'_{ik})$ be the $2n \times m$ matrix obtained from $B \ (=B_{ik})$ by shifting down the first $n$th rows and the last $n$th rows, respectively, one by one cyclically. Then, $\overline{\bm u}$ is parameterized by $\overline{\bm u} = (B'-B) {\bm \xi}$. However, the column elements of the matrix $(B'-B)$ are not mutually orthogonal and we need to modify this matrix.

Let $\bm{b'_1}, \bm{b'_2}, \dots, \bm{b'_m}$ be a set of the column vectors of the matrix $({B'}-B)$. Then, these vectors are linearly transformed into $\bm{\overline{b}_1}, \bm{\overline{b}_2}, \dots, \bm{\overline{b}_{m}}$ such that they are mutually orthonormal. Then, the matrix $\overline{B} \ (= {\overline{B}_{ik}})$ is obtained as $\overline{B}=(\bm{\overline{b}_1} \ \bm{\overline{b}_2} \ \dots \ \bm{\overline{b}_{m}})$. The Gram--Schmidt orthonormalization procedure was used to obtain the vectors $\bm{\overline{b}_1}, \bm{\overline{b}_2}, \dots, \bm{\overline{b}_{m}}$. Therefore, it takes $O(n^3)$ time to compute $\overline{B}_{ik}$ (because $m \sim O(n)$).

\subsection{The WAD and GAD distance cases} \label{sec:4_GAD}

When the WAD distance (Eq.~\ref{eq:weight_AD}) or the GAD distance (Eq.~\ref{eq:GAD}) is used, we can construct an efficient algorithm for the exhaustive search of the templates based on Algorithms 2 and 3. Both cases are explained together. 

Let $eval^{WAD}_{ikj}$ ($eval^{GAD}_{ikj}$) be the optimal value of the optimization problem (\ref{eq:formulation_general}) for each combination of $i \in I$, $k \in K_i$, and $j \in J$ when the WAD (GAD) distance is used. Then, we have the following theorems: 
\begin{theorem}
\label{th:2}
$eval^{AD}_{ikj} \leq eval^{WAD}_{ikj}$ for every combination of $i \in I$, $k \in K_i$, and $j \in J$, if all weights $k_1, k_2, \dots, k_n$ are greater than or equal to one. 
\end{theorem}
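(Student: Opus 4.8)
The plan is to mirror the proof of Theorem~\ref{th:1}. I would express both $eval^{AD}_{ikj}$ and $eval^{WAD}_{ikj}$ as constrained minima of quadratic forms over the common feasible set $\{\bm{u}=B_{ik}\bm{\xi}\}$, and then show that the WAD quadratic form dominates the AD one at every feasible point, because the difference of the two defining matrices is positive semi-definite.

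First I would recall that $eval^{AD}_{ikj}$ is the minimum of $(\bm{u}-\bm{w_j})^{\top} G^{AD}_j (\bm{u}-\bm{w_j})$ and $eval^{WAD}_{ikj}$ is the minimum of $(\bm{u}-\bm{w_j})^{\top} G^{WAD}_j (\bm{u}-\bm{w_j})$, both over $\bm{u}=B_{ik}\bm{\xi}$, where $G^{AD}_j$ is simply $G^{WAD}_j$ with all weights set to one; both matrices are block-diagonal of the form $\mathrm{diag}(K,K)$ with $K$ given by Eq.~\ref{eq:G_WAD}. If the $k(1,1)$-regularization of Section~\ref{sec:3_formulation} is applied, it adds the same rank-one term to $G^{AD}_j$ and $G^{WAD}_j$ and hence does not affect the argument.

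The key step is to decompose $G^{WAD}_j = G^{AD}_j + A_j$, where $A_j=\mathrm{diag}(K',K')$ and $K'$ has the same band structure as Eq.~\ref{eq:G_WAD} but is built from the nonnegative quantities $k_i-1\geq 0$ instead of $k_i$. The matrix $A_j$ is then the (weighted) Laplacian of the cycle graph on $n$ vertices with edge weights $k_i-1$, so it is positive semi-definite: for any vector written as $\bm{v}=(\bm{v}^x,\bm{v}^y)$ one has $\bm{v}^{\top} A_j \bm{v} = \sum_{i=1}^{n}(k_i-1)\big((v^x_{i+1}-v^x_i)^2+(v^y_{i+1}-v^y_i)^2\big) \geq 0$. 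Equivalently, this is just the identity $d^2_{WAD}(U,W)-d^2_{AD}(U,W) = \sum_{i=1}^{n}(k_i-1)\,\|(\bm{\hat{u}}_{i+1}-\bm{\hat{u}}_i)-(\bm{\hat{w}}_{i+1}-\bm{\hat{w}}_i)\|^2$, which is nonnegative whenever every $k_i\geq 1$.

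Finally, since $A_j$ is positive semi-definite, $(\bm{u}-\bm{w_j})^{\top} G^{WAD}_j (\bm{u}-\bm{w_j}) \geq (\bm{u}-\bm{w_j})^{\top} G^{AD}_j (\bm{u}-\bm{w_j})$ for every feasible $\bm{u}$; taking the minimum over $\bm{u}=B_{ik}\bm{\xi}$ on both sides gives $eval^{WAD}_{ikj}\geq eval^{AD}_{ikj}$. I do not expect a genuine obstacle here; the only point requiring a little care is the dependence of the weight matrices on $j$ (the weights are cyclically shifted together with the renumbering of the goal polygon), but since $G^{WAD}_j=G^{AD}_j+A_j$ with $A_j$ positive semi-definite for each $j$ separately, this bookkeeping is harmless. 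As in Theorem~\ref{th:1}, the statement becomes ``trivial'' once the right decomposition is identified.
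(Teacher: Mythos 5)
Your proposal is correct and follows essentially the same route as the paper: decompose the WAD matrix as $G^{WAD}_j = G^{AD}_j + A_j$ with $A_j$ positive semi-definite, then compare the two constrained minima pointwise. The only difference is that you make explicit what the paper leaves as an assertion, namely that $A_j$ is the weighted cycle-graph Laplacian built from the nonnegative weights $k_i-1$, with the sum-of-squares identity verifying its positive semi-definiteness.
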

\begin{proof}
Let $G_j$ defined for the WAD distance be decomposed into $E+A$, where $E$ is the matrix $G$ defined for the AD distance. From the definitions of the AD and WAD distances, both $E$ and $A$ are positive semi-definite matrices. Therefore, this theorem is proved in the same way as in Theorem \ref{th:1}. 
\end{proof}

\begin{theorem}
\label{th:3}
$eval^{AD}_{ikj} \leq eval^{GAD}_{ikj}$ for every combination of $i \in I$, $k \in K_i$, and $j \in J$.
\end{theorem}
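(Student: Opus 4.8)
The plan is to reduce Theorem~\ref{th:3} to the same positive-semidefinite comparison argument used in Theorems~\ref{th:1} and~\ref{th:2}. Recall that $eval^{AD}_{ikj}$ is the minimum over $\bm{u} = B_{ik}\bm{\xi}$ of $(\bm{u}-\bm{w_j})^{\top} E_{AD}(\bm{u}-\bm{w_j})$, where $E_{AD}$ is the matrix $G$ defined for the AD distance, and $eval^{GAD}_{ikj}$ is the minimum over the same affine set of $(\bm{u}-\bm{w_j})^{\top} G^{GAD}_j(\bm{u}-\bm{w_j})$. So it suffices to show that $G^{GAD}_j - E_{AD}$ is positive semi-definite, since then the GAD quadratic form dominates the AD quadratic form pointwise, hence so do their minima over any common domain.

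First I would write both matrices explicitly as graph Laplacians. By Section~\ref{sec:3_AD}, $E_{AD} = \begin{pmatrix} L_{C_n} & O \\ O & L_{C_n} \end{pmatrix}$, where $L_{C_n}$ is the Laplacian of the cycle graph on the $n$ points of $W$ (the edge set being exactly the polygon edges). By Section~\ref{sec:3_GAD}, $G^{GAD}_j = \begin{pmatrix} L_{H} & O \\ O & L_{H} \end{pmatrix}$, where $H$ is the graph on the same $n$ vertices with edge set $E_c$. The key structural observation, stated in the definitions of GAD1 and GAD2, is that $E_c$ \emph{contains all edges of the goal polygon} (they are retained ``with the highest priority''); thus the cycle $C_n$ is a subgraph of $H$. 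Consequently $L_H = L_{C_n} + L_{H \setminus C_n}$, where $H \setminus C_n$ is the graph on $n$ vertices whose edges are the remaining line segments in $E_c$, and $L_{H \setminus C_n}$ is a genuine (unweighted) graph Laplacian.

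The main fact I would invoke is that every graph Laplacian is positive semi-definite: for a graph with edge set $F$, $\bm{x}^{\top} L \bm{x} = \sum_{(p,q) \in F} (x_p - x_q)^2 \ge 0$. Applying this to $H \setminus C_n$ gives $L_{H \setminus C_n} \succeq 0$, hence $L_H - L_{C_n} \succeq 0$, hence $G^{GAD}_j - E_{AD} = \begin{pmatrix} L_{H\setminus C_n} & O \\ O & L_{H\setminus C_n}\end{pmatrix} \succeq 0$, and the theorem follows exactly as in Theorem~\ref{th:1}. Note that, unlike Theorems~\ref{th:1} and~\ref{th:2}, no weight hypothesis is needed here, because the GAD distance is unweighted and the domination is purely a consequence of edge-set containment $C_n \subseteq E_c$.

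I do not expect a serious obstacle; the only point requiring a little care is making the decomposition $E_c = (\text{polygon edges}) \,\sqcup\, (\text{extra segments})$ rigorous in light of the intersection-removal rule in the GAD1/GAD2 definitions --- but since the construction explicitly keeps all polygon edges, this disjoint decomposition is exactly what the definition provides, so citing the definition of $E_c$ suffices. The argument also transfers verbatim to the Procrustes-distance variant described in Appendix~A, since there too the relevant matrices are block-diagonal copies of $L_H$ versus $L_{C_n}$.
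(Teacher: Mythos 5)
Your proposal is correct and follows essentially the same route as the paper: the paper proves Theorem~\ref{th:3} by the decomposition $G_j = E + A$ with $E$ the AD matrix and $A$ positive semi-definite (exactly as in Theorems~\ref{th:1} and~\ref{th:2}), which is precisely your $L_H = L_{C_n} + L_{H\setminus C_n}$ argument. You merely make explicit what the paper leaves implicit, namely that $E_c$ contains the polygon edges so the remainder is itself a graph Laplacian and hence positive semi-definite, and you correctly observe that no weight hypothesis is needed.
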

\begin{proof}
Same as the proof of Theorem \ref{th:2}. 
\end{proof}

Considering that the value of $eval^{AD}_{ikj}$ can be used as a lower bound on the value of $eval^{WAD}_{ikj}$ ($eval^{GAD}_{ikj}$), an efficient exhaustive search algorithm with the WAD (GAD) distance is constructed in a similar way as Algorithm 2. Algorithm 4 depicts an efficient exhaustive search algorithm for the WAD distance. The same algorithm is applicable to the GAD distance simply by replacing WAD with GAD in the algorithm. 

\smallskip
\noindent
{\bf [Algorithm 4] Efficient exhaustive search with the WAD distance} 
\begin{itemize}
\item
Algorithm 2 is performed by replacing some items as follows: 
\begin{enumerate}
\item[(a)]
$eval^E_{ikj}, eval^E \rightarrow eval^{AD}_{ikj}, eval^{AD}$.
\item[(b)]
$eval^W_{ikj}, eval^W  \rightarrow eval^{WAD}_{ikj}, eval^{WAD}$.
\end{enumerate}
\item
The value of $eval^{AD}_{ikj}$ and its lower bound are computed according to Algorithm 3. 
\end{itemize}

When the WAD distance is used, the number of computations of $eval^{WAD}_{ikj}$ is reduced drastically, especially when the total amount of weights that exceed 1.0 is small. This is also true for the GAD distance, especially when the value of the parameter $\gamma$ is small. However, when the GAD distance was used in our preliminary experiments, executing Algorithm 4 required considerable computation times at appropriate values of $\gamma$ (1.2--1.6) for goal polygons composed of relatively many points ($n > 90$). The reason is that the value of $eval_{ikj}^{AD}$ no longer provides a tight lower bound on $eval^{GAD}_{ikj}$ because the set $E_c$ includes a significant number of edges other than those of the goal polygon $W$. 

To obtain results within an acceptable computation time when using the GAD distance, we present an incomplete search algorithm, which searches only promising configurations of $(k,j) \in (K_i, J)$ for each isohedral type $i \in I$. The basic idea is to estimate a lower bound on $eval^{GAD}_{ikj}$ heuristically. From the definition of the GAD distance (Eq.~\ref{eq:GAD}), it can be expected that the value of $eval^{GAD}_{ikj}$ is roughly proportional to the sum of the lengths of the edges in $E_c$. Let $d_{ij}$ be the straight-line distance between two points $i$ and $j$ on the goal polygon $W$ and $E_0$ be a set of the edges of $W$. A heuristic lower bound on $eval^{GAD}_{ikj}$ is then estimated by 
$r \cdot eval^{AD}_{ikj}$, where $r = \alpha \frac{\sum_{(i,j) \in E_c} d_{ij}}{\sum_{(i,j) \in E_0} d_{ij}}$ and $\alpha$ is a parameter. With the heuristic lower bound on $eval^{GAD}_{ikj}$, an incomplete search algorithm (Algorithm 5) for the GAD distance is constructed in the same way as Algorithm 4.

\smallskip
\noindent
{\bf [Algorithm 5] Incomplete search with the GAD distance} 
\begin{itemize}
\item
Algorithm 4 is performed by replacing the condition of the if-statement in line 14 (i.e., $eval^{AD} < eval_{min}$) with $r \cdot eval^{AD} < eval_{min}$. 
\end{itemize}

The heuristic lower bound may be greater than $eval^{GAD}_{ikj}$, and therefore the optimal tile shape with the minimum value of $eval^{GAD}_{ikj}$ (over all possible combinations of $i \in I$, $k \in K_i$, and $j \in J$) may be overlooked. The parameter $\alpha$ was set to 0.9 in consideration of the trade-off between the required computation time and the ratio of overlooking the optimal solution (in fact, the top ten solutions as described later).

\section{Experimental Results} \label{sec:5}

Experiments were conducted to examine the impact of the proposed distance measures on the quality of tile shapes when these distance measures were incorporated into the exhaustive search of the templates. The effectiveness of the efficient exhaustive (and incomplete) search algorithms combined with these distance measures was also verified. 

\subsection{Experimental settings} \label{sec:5_setting}

All the algorithms were implemented in C++ with the Eigen library (template headers for linear algebra, matrix, and vector operations) in Ubuntu 14.04 Linux. The programs were executed on PCs, each with an Intel Core i7-4790 3.60~GHz CPU.

The six distance measures considered in this study are summarized in Table \ref{table:distance_list} along with the corresponding efficient exhaustive (incomplete) search algorithms. For each algorithm, the top ten tile shapes in terms of the corresponding distance measure were stored by modifying the algorithm; this is easily implemented simply by defining $eval_{min}$ as the distance value of the current tenth best solution. We then selected an intuitively best one among them (see Section \ref{sec:5_quality}). In addition to Algorithms 2--4, for each distance measure, a naive exhaustive search algorithm was performed to assess the efficiency of Algorithms 2--4, where the value of ${eval_{ikj}^G}$ (Eq.~\ref{eq:eval_ikj_general}) was computed for all combinations of $i \in I$, $k \in K_i$, and $j \in J$. As for the WE (WAD) distance, weight 4 was assigned to 10--15 points (9--14 edges) of goal polygons and these points (edges) are specified later.

The efficient exhaustive (incomplete) search algorithms and the naive algorithms were applied to the six goal figures, bat ($n=58$), seahorse ($n=60$), pegasus ($n=60$), squid ($n=92$), octopus ($n=94$), and spider ($n=120$), shown in Fig. \ref{fig:result1}. 

\begin{table}
\begin{center}
\caption{Summary of the six distance measures and the corresponding efficient exhaustive (incomplete) search algorithms}
\label{table:distance_list}
\begin{tabular}{c|c|c}
\hline	
Distance   &   Parameters              &   Exhaustive (Incomplete)  \\ 
measures   &                           &   search \\ 
\hline	
\hline	
Euclidean  &    --                    & Algorithm 1 \\
\hline	
 WE        &  weights of vertices     & Algorithm 2 \\
\hline	
 AD        &    --                    & Algorithm 3 \\
\hline	
 WAD       &  weights of edges        & Algorithm 4 \\
\hline	
     GAD1  & $\gamma =$ 1.2, 1.4, or 1.6 & Algorithm 4 (5) \\
\hline	
     GAD2  & $\gamma =$ 1.2, 1.4, or 1.6 & Algorithm 4 (5) \\ 
\hline	
\end{tabular}
\end{center}
\end{table}

\subsection{Quality of the tile shapes} \label{sec:5_quality}

Fig. \ref{fig:result1} shows the tile shapes for the six goal figures obtained with the six distance measures using the corresponding exhaustive search algorithms. For each goal figure, Fig. \ref{fig:result1} shows the goal polygon, the set $E_c$ for the GAD2 distance (only the line segments inside the goal polygons are considered for the GAD1 distance), and intuitively best tile shapes obtained with the six distance measures. Note that, for each distance measure, the tile shape with the minimum distance value was not necessarily intuitively the most satisfactory one. Therefore, we selected an intuitively best one from the top ten tile shapes, where the numbers in parentheses indicate the ranking of the distance value. 

\begin{figure} [t] 
\centering
\includegraphics[scale=0.24,keepaspectratio,clip]{./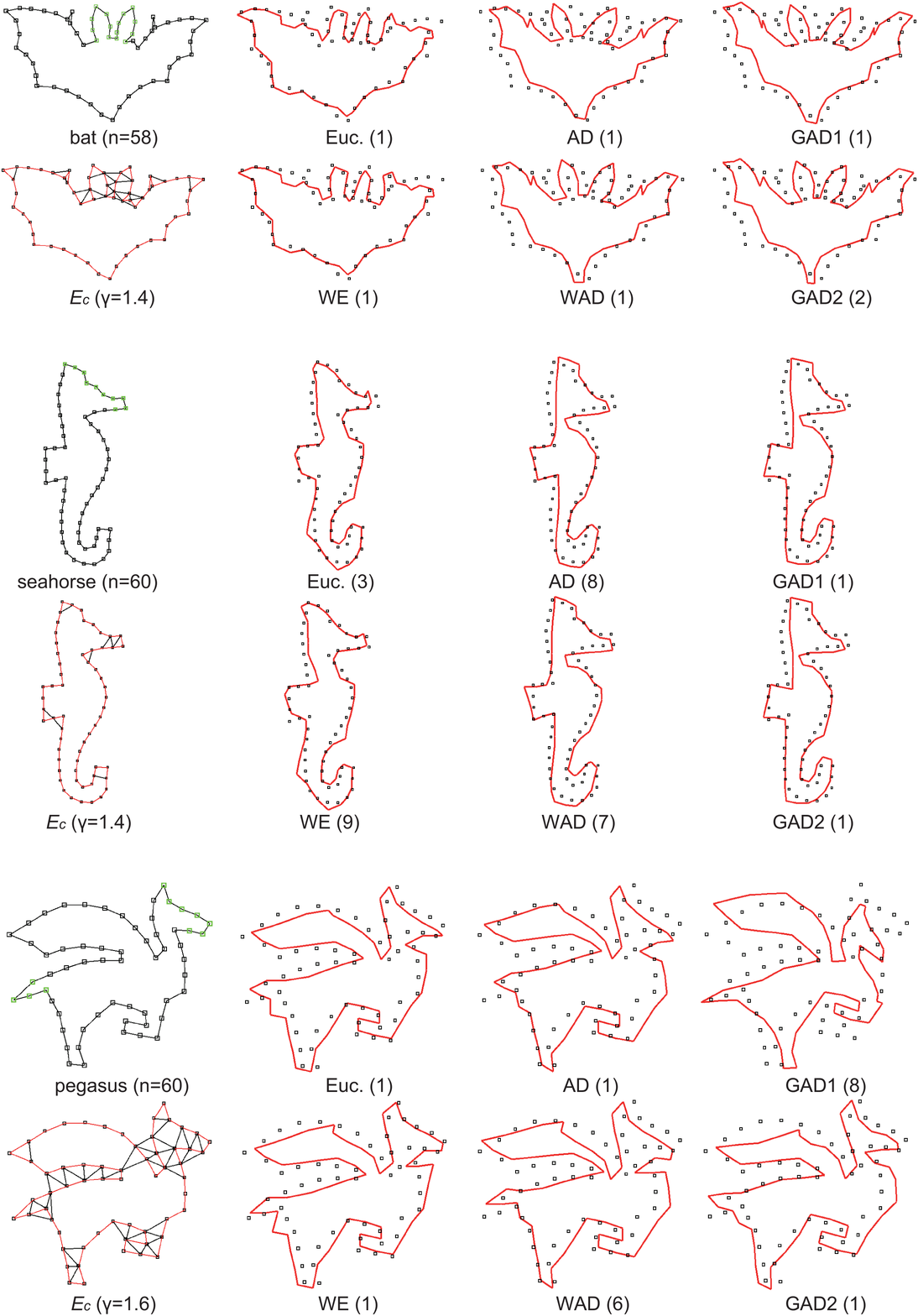}
\caption{Best tile shapes obtained with the six distance measures. For each distance measure, an intuitively best tile shape is selected from the top ten tile shapes, where the numbers in parentheses indicate the ranking of the distance value.}
\label{fig:result1}
\end{figure} 
\addtocounter{figure}{-1}
\begin{figure} [t] 
\centering
\includegraphics[scale=0.235,keepaspectratio,clip]{./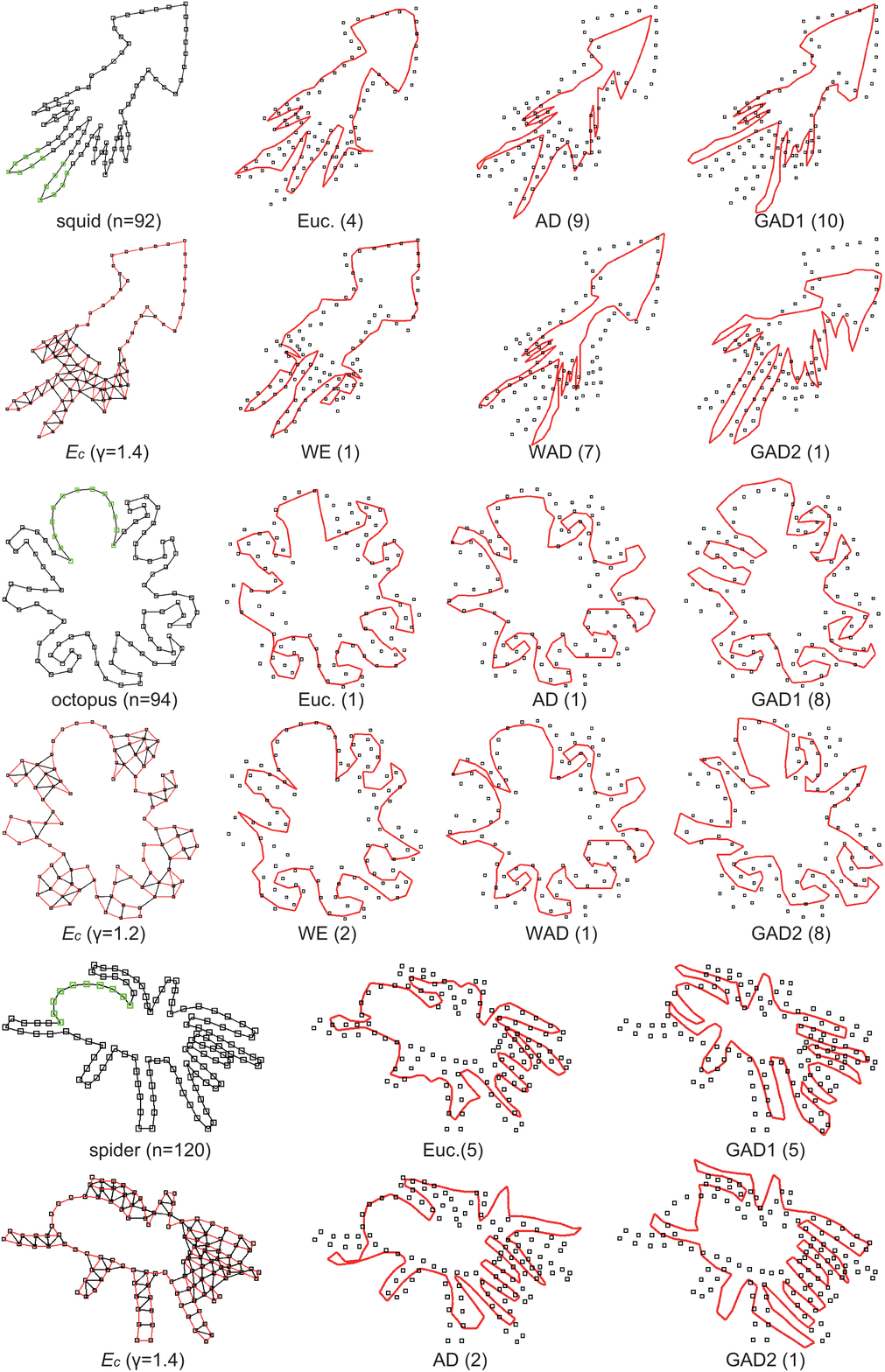}
\caption{(continued)}
\end{figure} 

\begin{figure*} [t] 
\centering
\includegraphics[scale=0.23,keepaspectratio,clip]{./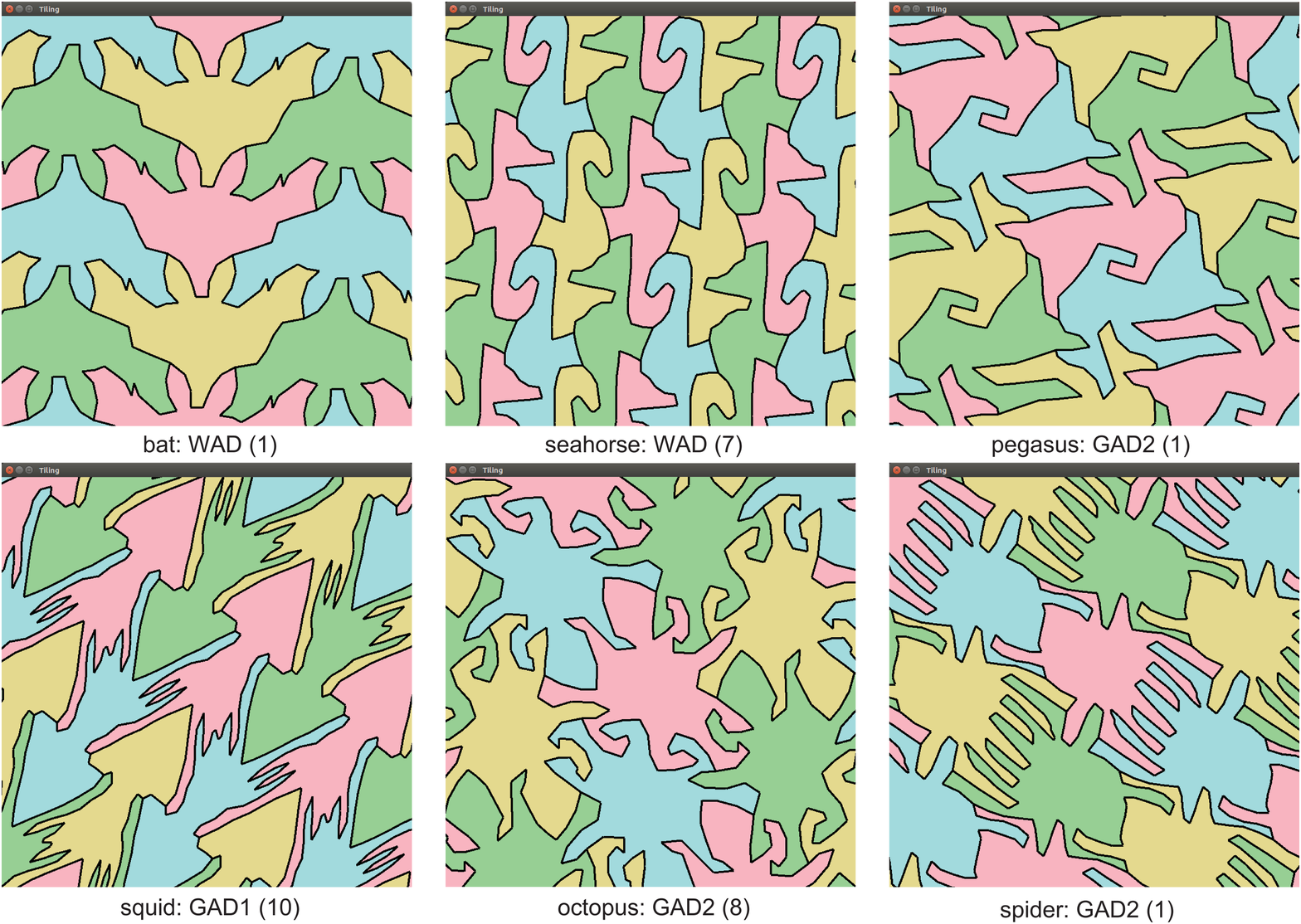}
\caption{Examples of tilings. Each tiling is formed from one of the tile shapes displayed in Fig. \ref{fig:result1}.}
\label{fig:tiling}
\end{figure*} 

In Fig. \ref{fig:result1}, each tile shape is superimposed onto the goal polygon (black points) such that the Euclidean distance between the two is minimized. The goal polygons shown in the figure represent information about the weights for the WE and WAD distances, where weight 4 is assigned to the green points (edges between the green points) for the WE (WAD) distance. As for the GAD distances, for each goal polygon, only the results obtained with the value of $\gamma$ specified in the figure are presented, where the value of $\gamma$ was selected such that relatively good results were obtained. 

We first focus on the tile shapes obtained with the Euclidean distance. We can observe that the obtained tile shapes overlap with the goal polygons most compared with the other cases. This is a natural consequence of the definition of the Euclidean distance. However, for complex goal polygons squid, octopus, and spider, the difference between the obtained tile shapes and the goal polygons are partly noticeable. 

Subsequently, we focus on the tile shapes obtained with the AD distance. The result for goal polygon bat is a typical example where the tile shape obtained with the AD distance is intuitively more satisfactory than the one obtained with the Euclidean distance. The reason is that the contour of the goal polygon is well preserved in the former one although the positions of the vertices of the tile polygon deviate significantly from the corresponding positions of the goal polygon. The result for goal polygon seahorse also appears to be more satisfactory than the one obtained with the Euclidean distance for the same reason. For the other goal polygons, the tile shapes obtained with the AD distance appear to be more or less satisfactory than the ones obtained with the Euclidean distance. 

Subsequently, we focus on the tile shapes obtained with the WE and WAD distances. Results for goal polygon spider are omitted here to enlarge the tile shapes obtained with the other distance measures. Basically, the tile shapes obtained with the WE (WAD) distance are similar to those obtained with the Euclidean (AD) distance, but the differences from the goal polygons in the weighted parts become smaller as expected. The results for goal polygons seahorse, pegasus, and octopus are typical examples in which the use of the WE and WAD distances improves the satisfaction of the obtained tile shapes. That is, the shapes of the heads, which would be most important parts characterizing these goal figures, are better preserved than those in the Euclidean and AD distance cases. For goal polygons squid and spider, it was difficult to determine appropriate parts to weight, and weights were assigned to the selected parts for no particular reason. 

Subsequently, we focus on the tile shapes obtained with the GAD1 distance. For relatively simple goal polygons bat and seahorse, the obtained tile shapes are not significantly different from the ones obtained with the (W)AD distance. Typical successful (or interesting) examples for the GAD1 distance are found in the results for goal polygons pegasus and squid, where intuitively satisfactory tile shapes are obtained by moderate deformations of the goal polygons while maintaining the local structures. For example, as expected from the result for squid, it is necessary to open the angle of the two tentacles to obtain a satisfactory tile shape, where the head is inserted between the tentacles in the resulting tiling (see Fig. \ref{fig:tiling}). As for the results for goal polygons octopus and spider, the local structures (especially the thickness of the legs and arms) are well maintained in the obtained tile shapes, as expected from the definitions of the GAD1 distance. For this reason, a relatively satisfactory tile shape is obtained for octopus. However, the tile shape obtained for spider\footnote{Almost the same tile shape as that displayed for the GAD2 distance was also obtained, but we selected this tile shape to display because it is interesting (and was not found with the GAD2 distance) and similar tile shapes were typically obtained with the GAD1 distance.} may no longer look like a spider because the shape of the abdomen was shrunk too much. 


Subsequently, we focus on the tile shapes obtained with the GAD2 distance. Similar to the results of the GAD1 distance, the local structures are well maintained in the obtained tile shapes, but the overall structure is fairly different from when the GAD1 distance is used for relatively complex goal polygons. From the definition of the GAD2 distance, it is expected that tile shapes obtained with the GAD2 distance are less deformed compared to those obtained with the GAD1 distance because the GAD2 distance considers the local structural similarity both inside and outside the goal polygon. This has both positive and negative effects, which make it possible to create more satisfactory tile shapes for some goal polygons. For pegasus and spider, the tile shapes obtained with the GAD2 distance seems to be more satisfactory than the ones obtained with the GAD1 distance. For squid, however, unlike the case of the GAD1 distance, the tile shape obtained with the GAD2 distance is far from satisfactory. The reason is that it is necessary to open the angle of the two tentacles to obtain a satisfactory tile shape (as explained earlier), but the GAD2 distance is sensitive to such a deformation.


Finally, Fig. \ref{fig:tiling} shows the tilings for the six goal polygons, each of which is created from an intuitively best tile shape selected from the tile shapes shown in Fig. \ref{fig:result1}.

\subsection{Efficiency of the algorithms} \label{sec:5_efficiency}

Table \ref{table:efficiency} lists the execution times of the efficient exhaustive (incomplete) search algorithms combined with the Euclidean, WE, AD, WAD, GAD1, and GAD2 distances. The results of the naive exhaustive search algorithms are also listed in the table. 

For the Euclidean distance, Algorithm 1 performs the exhaustive search very efficiently. For other distance measures, the execution times of the naive algorithms are much greater than that of Algorithm 1. More specifically, the naive algorithm for the AD distance is approximately 18 to 120 times slower than Algorithm 1 on the six goal polygons. The execution times of the naive algorithms for the WE, WAD, and GAD distances are roughly the same on each of the six goal polygons and are approximately 400--3,000 times greater than that of Algorithm 1; note that the execution times for the GAD1 and GAD2 distances are omitted in the table (about 1.1 times greater than the execution time of the WAD distance). The table shows that, for the WE, AD, and WAD distances, the efficient exhaustive search algorithms (Algorithms 2--4) are considerably faster than the naive algorithms, which makes it possible to perform the exhaustive search in a realistic computation time even for 100-point goal polygons. For the GAD1 and GAD2 distances, the efficient exhaustive search algorithm (Algorithm 4) is faster than the naive algorithm, but the speed-up effect is not so remarkable as in the case of the WE and WAD distances, especially for relatively large goal polygons ($n \geq 92$). Therefore, the required execution time may not be tolerable for these goal polygons. Instead, for the GAD1 and GAD2 distances, the proposed incomplete search algorithm (Algorithm 5) is useful to obtain results in a reasonable computation time, although there is a risk that some of the top ten tile shapes are overlooked. However, we confirmed that the top ten tile shapes were rarely overlooked; only the 10th best tile shape for the GAD2 ($\gamma=1.6$) distance on goal polygon seahorse was overlooked in the experiments. 



\begin{table}
\begin{center}
\caption{Execution time (in seconds, or hours if specified) of the efficient exhaustive (incomplete) search algorithms for the six distance measures}
{
\small
\label{table:efficiency}
\begin{tabular}{c|c|rrrrrr}
\hline	
Distance      & Algorithms    & bat     & seahorse     & pegasus   &  squid  &  octopus & spider  \\
measures      &               & $n=58$  & $n=60$       & $n=60$    &  $n=92$ & $n=94$   & $n=120$ \\
\hline	
\hline	
Euclidean     &  Algorithm 1  &  0.6    & 0.8          &  0.7      &  4.7       &  4.3       &  16.1   \\
\hline	
WE            &  Algorithm 2  &   1.1   &   1.2        &  1.1      &  10.9      &  15.6      &  66.8   \\
              &  naive        & 278.0   & 359.4        & 351.3     & 1.7{\bf h} & 1.9{\bf h} & 10.6{\bf h}  \\
\hline	
AD            &  Algorithm 3  &  1.7    & 2.0          &  2.2      &  19.2      & 27.5       &  117.6  \\
              &  naive        &  20.6   & 25.0         &  25.0     & 343.7      & 411.9      &  2103.6 \\
\hline	
WAD           &  Algorithm 4  &  2.9    & 3.0          &  4.9      & 62.4       & 77.8      &  340.7  \\
              &  naive        &  345.0  & 432.3        &  429.5    & 2.0{\bf h} & 2.3{\bf h} & 12.3{\bf h}   \\
\hline	
GAD1          &  Algorithm 4  &  2.4    & 2.4          &  5.8      &  63.5      &  209.6     & 785.6    \\
($\gamma=1.2)$&  Algorithm 5  &  2.3    & 2.5          &  4.2      &  29.5      &   45.7     &  152.4   \\
\hline	
GAD1          &  Algorithm 4  &  2.4    & 2.6          &   7.4     &  82.6      &  377.9     &  1580.0  \\
($\gamma=1.4)$&  Algorithm 5  &  2.1    & 2.6          &   3.8     &  27.3      &   37.3     &  143.6  \\
\hline	
GAD1          &  Algorithm 4  &  3.0    & 2.8          &  9.6     & 112.8      &  579.6     &  3241.2  \\
($\gamma=1.6)$&  Algorithm 5  &  2.2    & 2.2          &  3.5     &  27.6      &  33.9      &  152.6  \\
\hline	
GAD2          &  Algorithm 4  &  2.6    & 2.4          &  21.3     & 152.8      &  2158.8    & 1.1{\bf h} \\
($\gamma=1.2)$&  Algorithm 5  &  2.3    & 2.5          &   6.8     &  30.0      &   128.1    &  218.9 \\
\hline	
GAD2          &  Algorithm 4  &  3.3    & 2.7          & 43.3      & 474.7      & 1.8{\bf h} & 2.9{\bf h}  \\
($\gamma=1.4)$&  Algorithm 5  &  2.1    & 2.6          &  7.1      &  35.2      &  159.9     &  228.6  \\
\hline	
GAD2          &  Algorithm 4  &  6.0    & 3.6          & 67.5      & 1711.6     & 2.7{\bf h} & 8.7{\bf h}  \\
($\gamma=1.6)$&  Algorithm 5  &  2.1    & 2.3          &  6.4      &  49.4      &  214.5     &  382.6  \\
\hline	
\end{tabular}
}
\end{center}
\end{table}

\section{Discussions} \label{sec:6}


One of the main contributions of this paper is to introduce the GAD (GAD1 and GAD2) distances into the exhaustive search of the templates. A common concept of the GAD distances is to focus on the similarity of local structures between the goal and tile polygons. One may think that the set $E_c$ should be defined as a set of edges between all possible pairs of points of the goal polygon to consider not only the local structural similarity but also the global structural similarity (e.g. the overall positional relationship between all point). In our preliminary experiments, however, the use of such a distance measure resulted in almost the same tile shapes as when the Euclidean distance was used. In this sense, the Euclidean distance is most suitable to focus on the global structural similarity. It is expected that the balance between the global and local structural similarities can be adjusted by the parameter $\gamma$. As described in the experimental results section, however, the most satisfactory tile shapes were obtained when $\gamma$ was set from 1.2 to 1.6 (see the set $E_c$ shown in Fig. \ref{fig:result1}). This suggests that maintaining the local structures is a necessary condition for obtaining a satisfactory tile shape and the GAD distances should focus only on the local structural similarity.

In fact, the top ten tile shapes obtained with the GAD distances were rich in variety and some of them were very unsatisfactory because the global structure was sometimes seriously distorted. However, maintaining the local structure indirectly leads to maintaining the global structure, and the global structure was maintained to some extent in some of the top ten tile shapes obtained with the GAD distances. In such a case, we can obtain a very satisfactory tile shape that preserves both local and global structures at a high level. For this reason, the tile shape with the smallest distance value does not necessarily the one that is intuitively most similar to a given goal polygon (this is also the case when the Euclidean distance is used). This is not a big problem because in our observation the most satisfactory tile shape was usually included in the top ten tile shapes, and we can easily find the intuitively best one from them. However, it is desirable to develop a distance function that better matches the human sense. 

A possible drawback of the GAD distances (with appropriate parameter values for $\gamma$) is to focus on the similarity of the local structure only near the boundary surface. Therefore, a tile shape can be highly evaluated as long as the local structure near the boundary is well preserved even if the inside of the goal shape is excessively distorted to form the tile shape. Several techniques to address this problem have been proposed in the research field of the 3D (or 2D) surface modeling and its editing \cite{zhou2005large,igarashi2005rigid,weng20062d}. A basic idea is to represent an object as a mesh (polygon in the context of the Escherization problem) consisting of vertices arranged not only on the surface but also inside the object. To measure the naturalness of the deformation of the object, deformation energy functions (distance functions in the context of the Escherization problem) are defined as the squared sum of the differences in the relative positional relationship of adjacent points between the original and deformed objects. The same idea would be applicable to the distance function for the Escherization problem and we leave this for future work.

\section{Conclusion} \label{sec:7}

In this study, several distance functions suitable for Koizumi and Sugihara's formulation of the Escherization problem have been proposed to search for intuitively more satisfactory tile shapes than those obtained with the Euclidean (or Procrustes) distance. The weighted Euclidean distance emphasizes the similarity with important parts of the goal polygon specified by the user. The AD distance focuses on the similarity of the relative positional relationship of adjacent points, and its weighted variant is also proposed. The GAD distance is a generalization of the AD distance, where the similarity of the relative positional relationship of close points is considered without restricting adjacent points, and two types of the GAD distance (GAD1 and GAD2) are designed. The proposed distance functions are incorporated into the exhaustive search of the templates. It was demonstrated that the use of the proposed distance functions, especially the GAD1 or GAD2 distance, created more satisfactory tile shapes than the ones obtained with the Euclidean distance for the six goal polygons consisting of 58 to 120 points.   

The efficient exhaustive search algorithms developed for the weighted Euclidean, the AD distance, and the weighted AD distance were executed in a reasonable computation time (at most 340 s on a 120-point goal polygon). However, for the GAD distance, the efficient exhaustive search algorithm was time-consuming for relatively large goal polygons. Therefore, an incomplete search algorithm has also been developed for the GAD distance to obtain results in a reasonable computation time, which was at most 382 s on a 120-point goal polygon. It was confirmed that the incomplete search algorithm for the GAD distance rarely overlooked the top 10 solutions obtained by the exhaustive search algorithm. 

\section*{Acknowledgement}
This work was supported by JSPS KAKENHI Grant Number 20K11695.

\bibliographystyle{spmpsci} 
\bibliography{NAGATA}

\section*{APPENDIX}
\appendix

\section{Formulation of the Proposed methods with the Procrustes distance}

We have described Koizumi and Sugihara's original formulation of the Escherization problem \cite{koizumi2011maximum}, the exhaustive search of the templates \cite{nagata2019}, and the proposed distance functions based on the case where the Euclidean distance (Eq.~\ref{eq:euclid}) is available. However, we need to use the Procrustes distance (Eq.~\ref{eq:procrustes}) for some isohedral types (see Section \ref{sec:2_parameterization}). This appendix describes how these are modified when the Procrustes distance must be used. 

In \cite{nagata2019}, a simplified version of the Procrustes distance (see footnote 1) was used instead of the original one because it must be used to implement the third technique of Algorithm \ref{alg:E} (see Section \ref{sec:4_Euclidean}). The simplified Procrustes distance $d_P(U,W)$ can be expressed as a function of the vector ${\bm u}$ as follows: 
\begin{equation}
\label{eq:procrustes_simple}
d_P^2(U,W) = \min_{\theta} {\left\| R(\theta) U - W \right\|}^2
= {\bm{w}}^{\top} \bm{w} + {\bm{u}}^{\top} \bm{u} - 2 \sqrt{{\bm{u}}^{\top} V {\bm{u}}}, 
\end{equation}
where $V$ is the $2n \times 2n$ symmetric matrix defined by 
\begin{equation}
\label{eq:V}
V = \bm{w} \bm{w}^{\top} + \bm{w_c} \bm{w_c}^{\top}
\end{equation}
where $\bm{w_c}$ is the $2n$-dimensional vector defined as $\bm{w_c} = {(y^w_1, \dots, y^w_n, -x^w_1, \dots, -x^w_n)}^{\top}$. Note that Koizumi and Sugihara \cite{koizumi2011maximum} showed that the original Procrustes distance (Eq.~\ref{eq:procrustes}) can be transformed into $1 - \frac{1}{{\bm w}^{\top}{\bm w}} \frac{{\bm{u}}^{\top} V \bm{u}} {{\bm{u}}^{\top}\bm{u}}$, and the last expression of Eq.~\ref{eq:procrustes_simple} is obtained in a similar manner. Recall that we need to consider $n$ different numbering schemes of the goal polygon $W$, and $W_j \ (j = 1, 2, \dots, n)$ are defined. Therefore, we need to define $V_j$ individually for the goal polygons $W_j \ (j=1, 2, \dots, n)$. 

First, we explain how the optimization problem (\ref{eq:formulation_ikj_Euclid}) and the optimal value $eval^E_{ikj}$ (Eq.~\ref{eq:eval_ikj_Euclid}) are modified when the simplified Procrustes distance is used. From Eqs.~\ref{eq:u=Bxi} and \ref{eq:procrustes_simple}, the exhaustive search of the templates combined with the simplified Procrustes distance can be formulated as the following unconstrained optimization problem:
\begin{equation}
\label{eq:optimization_procrustes}
\argmin_{\bm{\xi}} \ {\bm{w}}^{\top} \bm{w} + {\bm{\xi}}^{\top} \bm{\xi} - 2 \sqrt{{\bm{\xi}}^{\top} {B_{ik}}^{\top} V_j B_{ik} {\bm{\xi}}},
\end{equation}
for all combinations of  $i \in I$, $k \in K_i$, and $j \in J$. For each triplet $(i,k,j)$, the optimal solution must satisfy the equation $\displaystyle {\bm{\xi}} - \frac{{B_{ik}}^{\top} V_j B_{ik} {\bm{\xi}}}{\sqrt{{\bm{\xi}}^{\top} {B_{ik}}^{\top} V_j B_{ik} {\bm{\xi}}}} = {\bm 0}$, which is obtained by setting the term of Eq.~\ref{eq:optimization_procrustes} differentiated by ${\bm{\xi}}$ to the zero vector. We can see that this equation is equivalent to 
\begin{eqnarray}
{B_{ik}}^{\top} V_j B_{ik} {\bm{\xi}} &=& \lambda {\bm{\xi}} \\
\left\| {\bm{\xi}} \right\|^2 &=& \lambda, 
\end{eqnarray}
where $\lambda$ is a scalar variable. Therefore, the solutions of this equation are the eigenvectors of ${B_{ik}}^{\top} V_j B_{ik}$, each of whose length is the square root of the corresponding eigenvalue. Let ${\bm{\xi}^*}$ and $\lambda$ be such an eigenvector and the corresponding eigenvalue. Then, we have ${\bm{\xi}^*}^{\top} \bm{\xi^*} - 2 \sqrt{{\bm{\xi}^*}^{\top} {B_{ik}}^{\top} V_j B_{ik} {\bm{\xi}^*}} = - \lambda$. Consequently, for each triplet $(i,k,j)$, the optimal value of the optimization problem (\ref{eq:optimization_procrustes}) is given by 
\begin{equation}
\label{eq:eval_ikj_procrustes}
eval^P_{ikj} = {\bm{w}}^{\top} \bm{w} - \lambda_{ikj}, 
\end{equation}
where $\lambda_{ikj}$ is the maximum eigenvalue of ${B_{ik}}^{\top} V_j B_{ik}$. The optimal solution ${\bm{\xi^*_{ikj}}}$ is then given by the corresponding eigenvector whose length is $\sqrt{\lambda_{ikj}}$. The maximum eigenvalue $\lambda_{ikj}$ and the corresponding eigenvector $\bm{\xi_{ikj}^*}$ can be computed in $O(n)$ time (the second technique of Algorithm \ref{alg:E}) \cite{nagata2019}.

Subsequently, we consider how the optimization problem (\ref{eq:formulation_general}) and its optimal value ${eval_{ikj}^G}$ (Eq.~\ref{eq:eval_ikj_general}) are modified when the simplified Procrustes distance is extended in the same way as the Euclidean distance is extended to the generalized quadratic form given by Eq.~\ref{eq:mahalanobis}. As ${\|X\|}^2 = \mathrm{Tr}(X X^{\top})$ for any matrix $X$, the simplified Procrustes distance is represented by 
\begin{equation}
\label{eq:tr_procrustes_ikj}
d_P^2(U,W) = \min_{\theta} \mathrm{Tr} \left(R(\theta) U - W \right) \left(R(\theta) U - W \right)^{\top}. 
\end{equation}
Then, the generalized version of the simplified Procrustes distance is expressed as follows: 
\begin{equation}
\label{eq:mahalanobis_procrustes_ikj}
d_{GP}^2(U,W) = \min_{\theta} \mathrm{Tr} \left(R(\theta) U - W \right) K \left(R(\theta) U - W \right)^{\top}. 
\end{equation}
As the matrix $K$ is a symmetric positive-definite matrix (of size $n$), there exists a square matrix $A$ such that $K = A^{\top} A$. Therefore, we have $d_{GP}^2(U,W) = \min_{\theta} \mathrm{Tr} \left(R(\theta) UA^{\top} - WA^{\top} \right)$ $\left(R(\theta) UA^{\top} - WA^{\top} \right)^{\top}$. Therefore, the right-hand side Eq.~\ref{eq:mahalanobis_procrustes_ikj} is obtained from the right-hand side Eq.~\ref{eq:tr_procrustes_ikj} simply by replacing $U$ and $W$ with $UA^{ \top}$ and $WA^{\top}$, respectively. From the definition of ${\bm u}, {\bm w}$, and $V$ (see Section \ref{sec:2_parameterization}), this is equivalent to replacing these values in Eq.~\ref{eq:procrustes_simple} as follows: ${\bm u} \rightarrow G^{\frac{1}{2}} {\bm u}$, ${\bm w} \rightarrow G^{\frac{1}{2}} {\bm w}$, and $V \rightarrow {G^{\frac{1}{2}}}^{\top} V G^{\frac{1}{2}}$, where $G^{\frac{1}{2}} = \begin{pmatrix}
A & O \\
O & A \\
\end{pmatrix}
$.
For example, let $U$ be represented as 
$U = 
\begin{pmatrix}
{\bm{u_x}}^{\top} \\ 
\bm{u_y}^{\top}
\end{pmatrix}  
$ 
, where $\bm{u_x} = {(x_1, \dots, x_n)}^{\top}$ and $\bm{u_y} = {(y_1, \dots, y_n)}^{\top}$. The replacement $U \rightarrow U A^{\top}$ is equivalent to 
$
\begin{pmatrix}
{\bm{u_x}}^{\top} \\ 
\bm{u_y}^{\top}
\end{pmatrix}  
\rightarrow 
\begin{pmatrix}
{\bm{u_x}}^{\top} A^{\top} \\ 
\bm{u_y}^{\top} A^{\top}
\end{pmatrix}  
$, and then 
$
{\bm u} = 
\begin{pmatrix}
{\bm{u_x}} \\ 
\bm{u_y}
\end{pmatrix}  
\rightarrow 
\begin{pmatrix}
A {\bm{u_x}} \\ 
A \bm{u_y}
\end{pmatrix}
= 
\begin{pmatrix}
A & O \\
O & A \\
\end{pmatrix}
\bm{u}
$.
Considering that ${G^{\frac{1}{2}}}^{\top} G^{\frac{1}{2}} = \begin{pmatrix}
K & O \\
O & K \\
\end{pmatrix}
= G$ (symmetric matrix),
$d_{GP}^2(U,W)$ is expressed as a function of ${\bm u}$ as follows:
\begin{equation}
\label{eq:mahalanobis_procrustes_simple}
d_{GP}^2(U,W) = {\bm{w}}^{\top} G \bm{w} + {\bm{u}}^{\top} G \bm{u} - 2 \sqrt{{\bm{u}}^{\top} G V G {\bm{u}}}. 
\end{equation}

From Eqs.~\ref{eq:u=Bxi} and \ref{eq:mahalanobis_procrustes_simple}, the exhaustive search of the templates combined with the generalized version of the simplified Procrustes distance can be formulated as the following unconstrained optimization problem:
\begin{equation}
\label{eq:optimization_mahalanobis_procrustes}
\argmin_{\bm{\xi}} \ {\bm{w_j}}^{\top} G_j \bm{w_j} + {\bm{\xi}}^{\top} {B_{ik}}^{\top} G_j B_{ik} \bm{\xi} - 2 \sqrt{{\bm{\xi}}^{\top} {B_{ik}}^{\top} {G_j} V_j G_j B_{ik} {\bm{\xi}}}, 
\end{equation}
for all combinations of  $i \in I$, $k \in K_i$, and $j \in J$. With the same calculation as in the case of the simplified Procrustes distance, we can observe that, for each triplet $(i,k,j)$, we need to solve the generalized eigenvalue problem 
\begin{equation}
\label{eq:generalized_eigenvalue_problem}
{B_{ik}}^{\top} {G_j} V_j G_j B_{ik} {\bm{\xi}} = \lambda {B_{ik}}^{\top} G_j B_{ik} {\bm{\xi}}
\end{equation}
to solve the optimization problem (\ref{eq:optimization_mahalanobis_procrustes}). The optimal value is then given by 
\begin{equation}
\label{eq:eval_ikj_general_procrustes}
eval_{ikj}^{GP} = {\bm{w_j}}^{\top} G_j \bm{w_j} - \lambda_{ikj},  
\end{equation}
where $\lambda_{ikj}$ is the maximum eigenvalue of this generalized eigenvalue problem. The optimal solution $\bm{\xi_{ikj}^*}$ is then given by the corresponding eigenvector whose length is determined to satisfy ${\bm{\xi_{ikj}^*}}^{\top} {B_{ik}}^{\top} G_j B_{ik} \bm{\xi_{ikj}^*} = \lambda_{ikj}$. 

In general, for a symmetric matrix $A$ and a symmetric positive-definite matrix $B$, the generalized eigenvalue problem $A {\bm x} = \lambda B {\bm x}$ can be converted into the eigenvalue problem \\ $(L^{-1} A (L^{-1})^{\top}) {\bm y} = \lambda {\bm y}$, where ${\bm y} = L^{\top} {\bm x}$ and $L$ is a lower triangular matrix such that $B = L L^{\top}$ (the Cholesky decomposition). This technique was used to solve the generalized eigenvalue problem (\ref{eq:generalized_eigenvalue_problem}), but it takes $O(n^3)$ time to perform the Cholesky decomposition of the matrix ${B_{ik}}^{\top}  G_j B_{ik}$. Once Eq.~\ref{eq:generalized_eigenvalue_problem} is converted into the eigenvalue problem, the maximum eigenvalue and the corresponding eigenvector can be computed in $O(n)$ time as in the case of the simplified Procrustes distance. 

\end{document}